\DeclareDocumentCommand{\hcancel}{mO{0pt}O{1pt}O{0pt}O{-1pt}}{%
    \tikz[baseline=(tocancel.base)]{
        \node[inner sep=0pt,outer sep=0pt] (tocancel) {#1};
        \draw[gray] ($(tocancel.south west)+(#2,#3)$) -- ($(tocancel.north east)+(#4,#5)$);
    }%
}%
\newcommand\card{\myfresh}
\let\myfresh\#
\def\#{\ensuremath{\text{\tt\myfresh}}}
\newcommand\mnat\nat
\newcommand\seqpow\powerset
\newcommand\seqpowi\powerset
\newcommand\dotin{\mathrel{\dot{\in}}}
\newcommand\dotsubseteq{\mathrel{\dot{\subseteq}}}
\newcommand\fv{\f{fv}}
\newcommand\vara{\text{\tt a}}
\newcommand\varb{\text{\tt b}}
\newcommand\varc{\text{\tt c}}
\newcommand\perm{\f{Perm}}
\newcommand\mjg[1]{} % {{\bf mjg #1}}}
\newcommand\ON{{\theory{ON}}}
\newcommand\ONM{{\theory{ON}_{\mathfrak M}}}
\newcommand\nat{{\mathbb N}}
\newcommand\plus{{\text{+}}}
\newcommand\finsubseteq{\mathbin{\subseteq_{\text{\it fin}}}}
\newcommand\tbot{{\pmb\bot}}
\newcommand\tin{{\pmb{\in}}}
\renewcommand\land{\wedge}
\renewcommand\lor{\vee}
\newcommand\limp{\Rightarrow}
 \renewenvironment{thebibliography}[1]{%
   \begin{odlthebibliography}{#1}%
     \setlength{\parskip}{0ex}%
     \setlength{\itemsep}{3pt}%
     \fontsize{10}{10} %Change these numbers to
                         %change font size
     \selectfont
}%
 {%
   \end{odlthebibliography}%
 }
\newlength{\mylength}
\newenvironment{frameqn}%
{\setlength{\fboxsep}{5pt}
\setlength{\mylength}{\linewidth}%
\addtolength{\mylength}{-2\fboxsep}%
\addtolength{\mylength}{-2\fboxrule}%
\Sbox
\minipage{\mylength}%
\setlength{\abovedisplayskip}{0pt}%
\setlength{\belowdisplayskip}{0pt}%
$$}%
{$$\endminipage\endSbox
{\setlength{\abovedisplayskip}{1pt}%
\setlength{\belowdisplayskip}{0pt}%
\[\fbox{\TheSbox}\]}}
\newenvironment{frametxt}%
{\setlength{\fboxsep}{5pt}
\setlength{\mylength}{\linewidth}%
\addtolength{\mylength}{-2\fboxsep}%
\addtolength{\mylength}{-2\fboxrule}%
\Sbox
\minipage{\mylength}%
\setlength{\abovedisplayskip}{5pt}%
\setlength{\belowdisplayskip}{5pt}%
}%
{\endminipage\endSbox
{\setlength{\abovedisplayskip}{1pt}%
\setlength{\belowdisplayskip}{0pt}%
\[\fbox{\TheSbox}\]}}
\newcommand\theory[1]{\ensuremath{\mathsf{#1}}\xspace}
\def\:{{\hspace{-1pt}{:}\hspace{-1.25pt}{:}\hspace{-.5pt}}}
\def\id{\mathtxt{id}}
\newcommand\ssm{{{:}\text{=}}}
\newcommand\deffont[1]{{\bf #1}}
\newcommand\mone{{{\text{-}1}}}
\newcommand\liff{\Leftrightarrow}
\newcommand\supp{\f{supp}}
\newcommand\f[1]{\mathit{#1}}
\newcommand\atoms{\ensuremath{\mathbb{A}}\xspace}
\newcommand\dact[1]{}
\newcommand\powerset{\f{pset}}
\newcommand\lmodel{[\hspace{-0.2em}[}
\newcommand\rmodel{]\hspace{-0.2em}]}
\newcommand\model[1]{{\lmodel #1 \rmodel}}
\newcommand\act[0]{{\cdot}}
\newcommand\mact[0]{{{\cdot}{\cdot}}}
\newcommand\is{\text{\ \ \ is\ \ \ }}
\newcommand{\Defiff}% definitional bi-implication
 {\mathrel{\ \ \stackrel{\scriptstyle \mathrm{def}}{\Leftrightarrow}\ \ }}
\newcommand{\defeq}% definitional equality
  {\stackrel{\mathrm{def}}{\,=\,}}
\newcommand\fix{\f{fix}}
\newcommand\ment[0]{\mathrel{\vDash}}
\newcounter{jamieitemcounter}
\newenvironment{thrm}{\begin{thm}}{\end{thm}}
\newenvironment{lemm}{\begin{lem}}{\end{lem}}
\newenvironment{corr}{\begin{cor}}{\end{cor}}
\newenvironment{defn}{\begin{defi}}{\end{defi}}
\newenvironment{nttn}{\begin{nota}}{\end{nota}}
\newenvironment{xmpl}{\begin{exa}}{\end{exa}}
\newenvironment{rmrk}{\begin{rem}}{\end{rem}}
\newtheorem{xrcs}[thm]{Exercise}
\newcommand\mathtxt[1]{ \ensuremath{\mathrm{#1}} }
\newcommand\rulefont[1]{\ensuremath{{\mathrm{\bf (#1)}}}}
\newcommand\nontriv{\f{nontriv}}
\newcommand\Forall[1]{\forall #1.}
\newcommand\Exists[1]{\exists #1.}
\newcommand\ExistsUnique[1]{\exists ! #1.}
\def\at{\text{@}}
  \newcommand{\fdsy@scale}{1.0}
  \newcommand\fdsy@mweight@normal{Book}
  \newcommand\fdsy@mweight@small{Book}
  \newcommand\fdsy@bweight@normal{Medium}
  \newcommand\fdsy@bweight@small{Medium}
  \DeclareFontFamily{U}{FdSymbolA}{}
  \DeclareSymbolFont{fdsymbols}{U}{FdSymbolA}{m}{n}%
  \DeclareFontShape{U}{FdSymbolA}{m}{n}{
      <-7.1> s * [\fdsy@scale] FdSymbolA-\fdsy@mweight@small
      <7.1-> s * [\fdsy@scale] FdSymbolA-\fdsy@mweight@normal
  }{}
  \DeclareFontShape{U}{FdSymbolA}{b}{n}{
      <-7.1> s * [\fdsy@scale] FdSymbolA-\fdsy@bweight@small
      <7.1-> s * [\fdsy@scale] FdSymbolA-\fdsy@bweight@normal
  }{}
  \DeclareMathSymbol{\aleph}{\mathord}{fdsymbols}{"C7}
  \DeclareMathSymbol{\beth}{\mathord}{fdsymbols}{"C8}
  \DeclareMathSymbol{\gimel}{\mathord}{fdsymbols}{"C9}
  \DeclareMathSymbol{\daleth}{\mathord}{fdsymbols}{"CA}
\begin{document}

\title[Equivariant ZFA]{Equivariant ZFA and the foundations of nominal techniques} 
\thanks{Many thanks for the detailed and constructive comments of two anonymous referees}

\author[M.~Gabbay]{Murdoch Gabbay}

\begin{abstract}
We give an accessible presentation to the foundations of nominal techniques, lying between Zermelo-Fraenkel set theory and Fraenkel-Mostowski set theory, and which has several nice properties including being consistent with the Axiom of Choice.
We give two presentations of equivariance, accompanied by detailed yet user-friendly discussions of its theory and application.
\end{abstract}
\keywords{Nominal techniques, equivariance, Zermelo-Fraenkel set theory with atoms (ZFA)}

\maketitle
\vspace{-1em} %hack to get all of toc into first page
\tableofcontents

%%%%%%%%%%%%%%%%%%%%%%%%%%%%%%%%%%%%%%%%%%%%%%%%%%%
\section{Introduction}

Nominal techniques are based on positing the existence of a set of \emph{atoms} $a,b,c,\ldots\in\atoms$.
These are atomic elements which can be compared for equality but which have few if any other properties. 
The applications of this deceptively simple idea are collectively called \emph{nominal techniques}, and they are surprisingly rich, varied, and numerous.
We list some of them in Subsection~\ref{subsect.app}; just enough to give the reader some flavour of the scope of this field.

Nominal techniques are a success story in the fruitful interaction of logical foundations, mathematics, and computing. 
Yet precisely this interaction means that even experienced readers sometimes struggle to understand what is going on: 
what does it mean when we write `assume a set of atoms', and what does this assumption really buy us? 

We aim to clarify such questions for three types of reader:
\begin{itemize*}
\item
Readers who may have seen nominal techniques in action but have not given much thought to the subtleties involved in making their foundation precise, and who might appreciate an exposition. 
\item
The fellow writer of a paper using nominal techniques, looking for ideas and suggestions on how to set up the foundations.
\item
Experts in mathematical foundations who (perhaps inspired by Section~\ref{subsect.app}) might be interested in equivariance as an interesting new foundational principle for its own sake.
\end{itemize*} 
Looking at the instruction `assume a set of atoms', an expert in foundations may take this to mean that we work in Zermelo-Fraenkel set theory with atoms (\deffont{ZFA}), instead of in Zermelo-Fraenkel set theory (\deffont{ZF}). 
This intuition is correct, albeit not a full picture, so let us start there. 

Now ZFA is unnecessary because atoms could be modelled by $\mathbb N$ in ZF (or by $\powerset(\mathbb N)$ if we want more atoms, and so forth).
In fact ZFA and ZF are equivalent and biinterpretable in the sense that any model of ZFA can be embedded in a model of ZF, and vice-versa, and anything that we express in ZFA can be translated (quite easily) to an assertion about ZF, and vice-versa.
So in terms of expressivity, ZF and ZFA are the same.
And yet:
\begin{itemize*}
\item
if the translation from ZFA to ZF leads to, say, a quadratic increase in proof-size, or 
\item
if ZFA provides an environment in which it is easier to express ourselves, or 
\item
if ZFA is an environment which naturally lets us perceive \emph{native ZFA concepts}\footnote{\dots meaning concepts that are hard to address in full generality in ZF, where we do not give ourselves atoms, but easy to see in ZFA, where we do.  We will see two examples in this paper: equivariance (see Remark~\ref{rmrk.equivar.native}); and freshness and support (see Subsection~\ref{subsect.fresh.well-behaved}); and Subsection~\ref{subsect.app} is, in a sense, a longer list of native ZFA concepts either combined with and extending familiar concepts from ZF, or leading to new constructs and ways of thinking.} 
\end{itemize*}
then the net gain from working in ZFA can be significant \emph{even if} everything could in principle be compiled back down to ZF.

An analogy: Roman numerals I, II, III, IV, 
can express numbers just as arabic numerals 0, 1, 2, 3, 4, but nobody seriously asserts they are functionally equivalent.
There is such a thing as `a good foundation' and `a poor foundation' for a given task. 
Foundations matter.

So in this paper we will explore specifically and in detail what it means when we write `assume a set of atoms'. 
Specifically, I propose --- perhaps a little provocatively --- that it means we are working in \deffont{equivariant Zermelo-Fraenkel set theory} or \emph{EZFA} for short. 

Now a foundation for nominal techniques has been proposed before, in the paper which started the topic \cite{gabbay:newaas-jv}: Fraenkel-Mostowski set theory (\deffont{FM}), which is an elaboration of ZFA with a \emph{finite support} axiom (see Subsection~\ref{subsect.support}).
We will discuss FM in detail, but we can note in this Introduction that FM is inconsistent with the Axiom of Choice, whereas EZFA has the advantage that EZFA plus Choice (EZFAC) is consistent.

\subsection{Motivations for considering the foundations of equivariance}
\label{subsect.app}

This list, which is nonexhaustive and in no particular order, indicates some of the applications of nominal ideas and why working in an environment in which atoms are explicitly available, can be helpful:
\begin{enumerate}
\item
In EZFA, the notion of finiteness generalises naturally to \emph{orbit-finiteness}; the property a set can have of having finitely many orbits under permutations of atoms (see Definition~\ref{defn.permutation.action} and Subsection~\ref{subsect.on.permutations}).

We can then work with orbit-finite computational structures, such as orbit-finite automata.
See \cite{bojanczyk:auttns,klin:locfcs}.
\item
In a similar vein we can consider nominal generalisations of Kleene Algebras~\cite{gabbay:frenrs,kozen:nomkc,kozen:comink}, thus, classes of languages that include orbits under permutations of atoms. 
\item
Homotopy Type Theory~\cite{hottbook} is a dependent type theory whose types include \emph{paths}; simplifying somewhat, two elements are `equal' when there is a path from one to the other.
There are many semantics for these paths; a nominal semantics provides one of the simplest and cleanest~\cite{coquand:cubtt}.
\item
The native universal algebra\footnote{Logic and models of equational theories, i.e. of sets of equalities between terms.} of EZFA interestingly generalises that of ZF.

We can axiomatise theories with binders, including substitution, first-order logic, and the $\lambda$-calculus \cite{gabbay:capasn-jv,gabbay:stodfo,gabbay:repdul}.\footnote{The informed reader will wonder whether these theories look a bit like cylindric algebras~\cite{henkin:cyla}.  The answer is: yes and no, much as the natural numbers look like atoms, but only at first, and the closer we look the less similar they become.} 

These axiomatisations lead to Stone Dualities between: algebras for first-order logic and the $\lambda$-calculus; and topological spaces with points naturally constructed in EZFA~\cite{gabbay:stodfo,gabbay:repdul}.

Furthermore the general theory of nominal algebra models has a rich structure.
This includes familiar constructs, such as a nice generalisation of the HSP theorem \cite{gabbay:nomahs}, as well as constructions for which mathematics based on ZF has no correspondent~\cite{gabbay:metvil,gabbay:finisn}.
\item
The original motivation for nominal techniques was \emph{nominal abstract syntax}, essentially a generalisation of tree-structured data to include name-binding~\cite{gabbay:newaas-jv,gabbay:fountl,pitts:nomsns}.
Extensive implementations exist, most notably perhaps Nominal Isabelle~\cite{urban:nomrti}.
These implementations are designed to allow us to specify and reason about all the applications listed here, and more --- starting of course with the syntax and operational semantics of logic and programming. 
\item 
Nominal rewriting \cite{gabbay:nomr-jv} is a theory of rewriting (directed equality) that lets us reason on theories with binding and has good properties, such as most general unifiers.

The theory of syntax and meta-syntax in EZFA is itself rich; aside from nominal abstract syntax we have nominal terms and their unification \cite{gabbay:nomu-jv}, nominal rewriting as mentioned above, connections to the simply-typed $\lambda$-calculus and higher-order logic \cite{levy:nomufh-jv,gabbay:pnlthf}, and an equation between infinite streams and meta-languages~\cite{gabbay:metvil}.
\item
The equivariance properties described in Figure~\ref{fig.equivar} and Theorem~\ref{thrm.equivar} are generalisations of $\alpha$-equivalence.
Conversely, the variable-renaming that unobtrusively converts $\forall x.(x{=}x)$ into $\forall y.(y{=}y)$ is a special case of the general set-theoretic principles discussed in this paper.

Thus the equivariance principle of this paper is already known to the reader, in the same sense that ring theory is already known to anybody who has added and multiplied numbers. 
\item
More references are e.g. in the bibliography of \cite{pitts:nomsns}.
\end{enumerate}

%%%%%%%%%%%%%%%%%%%%%%%%%%%%%%%%%%%%%%%%%%%%%%%%%%%
\section{The language of sets with atoms}

We establish the syntax and semantics of ZFA set theory.
The reader already familiar with this might prefer to skip straight to Definition~\ref{defn.pi}, which is the first item that is not a standard ZFA definition.

\begin{figure}
$$
\begin{array}{l@{\qquad}l@{\hspace{-6em}}l}
\rulefont{AtmEmpty}&
t\in s\limp s\not\in\atoms
\\
\rulefont{EmptySet}&
t\not\in\varnothing
\\
\rulefont{Extensionality}&
s,s'\not\in\atoms\limp (\Forall{\varb}(\varb\in s\liff \varb\in s'))\limp s=s'
\\
\rulefont{Comprehension}&
s\in \{\vara\in t\mid \phi\} \liff (s\in t\land\phi[\vara\ssm s])
\\
\rulefont{Pair}&
t\in\{s,s'\} \liff (t=s\lor t=s')
\\
\rulefont{Union}&
t\in\bigcup s \liff \Exists{\vara}(t\in\vara\land \vara\in s)
\\
\rulefont{Powerset}&
t\in\powerset(s)\liff t\subseteq s 
\\
\rulefont{Induction}&
(\Forall{\vara}(\Forall{\varb{\in}\vara}\phi[\vara\ssm\varb])\limp\phi) \limp \Forall{\vara}\phi
&\qquad\qquad\fv(\phi)=\{\vara\}
\\
\rulefont{Infinity}&
\Exists{\varc}\varnothing\in\varc\land\Forall{\vara}\vara\in\varc\limp\vara{\cup}\{\vara\}\in\varc
\\
\rulefont{AtmInf}&
\neg(\atoms\finsubseteq\atoms)
%\f{finite}(\atoms)
%\varnothing = (\atoms\hookrightarrow\nat)
\\
\rulefont{Replacement}&
%F\text{ injective}\limp
\Exists{\varb}\Forall{\vara}\vara\in\varb\liff \Exists{\vara'}\vara'\in u\land \vara=F(\vara')
\\
\rulefont{Choice}&
\varnothing\neq (\powerset^*(s)\to s) &\powerset^*\text{ is nonempty powerset}
\end{array}
$$
\caption{Axioms of ZFA(C)}
\label{fig.zfa}
\end{figure}

\begin{figure}
$$
\begin{array}{l@{\qquad}l@{\qquad}l}
\\
\rulefont{Equivar} & \Forall{\vara{\in}\f{Perm}}(\phi\liff \phi[\vara_1\ssm\vara\act\vara_1,\dots,\vara_n\ssm\vara\act\vara_n]) 
&\fv(\phi)=\{\vara_1,\dots,\vara_n\}
\end{array}
$$
\caption{Equivariance axiom of EZFA}
\label{fig.equivar}
\end{figure}

\subsection{Equivariant ZFA}

\begin{defn}
\label{defn.fol}
Assume a countably infinite collection of \deffont{variable symbols} $\vara,\varb,\varc,\dots$.
Let the \deffont{language of sets with atoms} be defined by:
$$
\begin{array}{r@{\ }l}
s,t::=& \vara \mid \{s,t\} \mid \powerset(s) \mid \bigcup s \mid \{\vara\in t\mid\phi\} \mid \varnothing \mid \atoms
\\
\phi ::=& s=t \mid t\in s \mid \bot \mid \phi\land\phi \mid \forall\vara.\phi 
\end{array} 
$$
\end{defn}

\begin{rmrk}
Definition~\ref{defn.fol} defines a language of first-order logic with:
\begin{itemize*}
\item
equality $=$, 
\item
sets membership $\in$,
\item
terms for pairset $\{s,t\}$, union $\bigcup s$, powerset $\powerset(s)$, bounded comprehension $\{s\in t\mid \phi\}$, and
\item
constant symbols for the empty set $\varnothing$ and the set of atoms $\atoms$.
\end{itemize*}
This is a sufficient foundation for mathematics, and we can define:
\end{rmrk}

\begin{defn}
\label{defn.ZFA}
\begin{enumerate}
\item
Write \deffont{ZFA} for axioms \rulefont{AtmEmpty} to \rulefont{Replacement} from Figure~\ref{fig.zfa}. 
\item
Write \deffont{ZFAC} for ZFA plus \rulefont{Choice}.
\item
Write \deffont{EZFA} for ZFA plus \rulefont{Equivar} from Figure~\ref{fig.equivar}.
\item
Write \deffont{EZFAC} for ZFA plus \rulefont{Choice} plus \rulefont{Equivar}.
\end{enumerate}

In full, EZFAC stands for \emph{Equivariant Zermelo-Fraenkel Set Theory with Atoms and Choice}. 
\end{defn}

\begin{nttn}
In Figure~\ref{fig.zfa} and elsewhere we may use standard syntactic sugar.
For example:
\begin{itemize*}
\item
In \rulefont{Powerset} we write $t\subseteq s$ and this is shorthand for $\Forall{\vara}(\vara\in t\limp\vara\in s)$.
\item
In \rulefont{AtmInf} we write $\finsubseteq$.
For a given $y$, the relation $x\finsubseteq y$ is inductively defined to be least such that $\varnothing\finsubseteq y$ and $x\finsubseteq y\land y'\in y$ implies $x\cup\{y'\}\finsubseteq x$.
\item
In \rulefont{Choice}, $\powerset^*(x)$ which is shorthand for the set of nonempty subsets $\{\vara\in\powerset(x) \mid \vara\neq\varnothing\}$.
\item
We will write $\hookrightarrow$ for injective function-sets, where a function-set is an element representing the graph of a function.
So $x\hookrightarrow y$ is the set of graphs of injective functions from $x$ to $y$.
\end{itemize*}
The interested reader can find detailed explanations of how these standard constructs are built up from first principles in e.g. Section~5 of~\cite{johnstone:notlst}.
We will assume these basic set-theoretic conventions henceforth.
See also the encyclopaedic treatment of set theory in~\cite{jech:sett} and in particular see page~250 where ZFA is discussed. 
\end{nttn}

\begin{rmrk}
\label{rmrk.redundancy}
\begin{enumerate}
\item
There is redundancy in Definition~\ref{defn.fol}: 
\begin{itemize*}
\item
We could restrict terms to just be variable symbols $\vara$ (losing term-formers for pairset, powerset, union, comprehension, emptyset, and the set of atoms). 
\item
Because we have \rulefont{Extensionality}, we could in principle restrict predicates by losing the equality predicate-former (just keeping $\in$, $\bot$, $\land$, and $\forall$).\footnote{But, since first-order logic with equality comes with an equality congruence derivation rule, removing equality from the logic would require us to reintroduce the congruence rule as (long and awkward) explicit axioms; so there is no net gain.
We might also wish to switch to Quine atoms; see Remark~\ref{rmrk.quine}.} 
\end{itemize*}
This restricted language would be just as expressive, but more verbose.
\item
The axioms in Figures~\ref{fig.zfa} and~\ref{fig.equivar} are actually axiom-schemes (for all $s$, $t$, $s'$, $\phi$, and $F$).
We could reduce the number of axiom-schemes, in some cases very easily: for instance by rewriting \rulefont{Emptyset} as $\Forall{\vara}\vara\not\in\varnothing$.
In other cases this is harder, in particular rules \rulefont{Induction} and \rulefont{Replacement} contain parameters $\phi$ and $F$ and since first-order logic lacks second-order quantification by design, these rules are inherently schemas.
This is standard for axiomatisations of set theory.
 
We will not be too concerned about this and we just optimise for what seems to be the most readable form.
\end{enumerate}
\end{rmrk}

\begin{rmrk}
We briefly spell out intuitions for the axioms:
\begin{itemize*}
\item
\rulefont{AtmEmpty}\quad 
Atoms have no elements (so they are extensionally equal to the empty set $\varnothing$).
\item
\rulefont{EmptySet}\quad
$\varnothing$ has no elements. 
\item
\rulefont{Extensionality}\quad 
If $x$ and $x'$ are sets and have the same elements, then $x=x'$.
\item
\rulefont{Comprehension}\quad
If $y$ is an element then $\{x{\in}y\mid \phi(x)\}$ is a set. 
\item
\rulefont{Pair}\quad
If $x$ and $x'$ are elements then so is $\{x,x'\}$.
\item
\rulefont{Union}\quad
If $x$ is an element then so is $\bigcup x=\{x''\mid x''\in x'\in x\}$.
\item
\rulefont{Powerset}\quad
If $x$ is an element then so is $\powerset(x)$ its collection of subsets.
\item
\rulefont{Induction}\quad
$\in$ is well-founded.
\item
\rulefont{Infinity}\quad
A set exists that contains $\nat$, so that (by comprehension) $\nat$ is a set.
\item
\rulefont{AtmInf}\quad
This expresses that $\atoms$ is an infinite set.
\item
\rulefont{Replacement}\quad
If $F$ is a function-class and $z$ is a set then $\{F(x)\mid x\in z\}$ is a set.\footnote{Note that \rulefont{Comprehension} implies that the image of a set $z$ under a function-set $f$ is a set: we just write 
$f`(z) = \{f(x)\mid x\in z\} = \{x'\in z\mid \Exists{x} (x,x')\in f\}$.
So the real power of \rulefont{Replacement} is that it tells us that the image of a set under a function-\emph{class}, is a set.}
\item
\rulefont{Choice}\quad
If $x$ is a set then there exists a \deffont{choice function}-set mapping $\powerset^*(x)$ to $x$.
\item
We discuss \rulefont{Equivar} from Figure~\ref{fig.equivar} from Remark~\ref{rmrk.read.equivar} onwards, after we have constructed some necessary machinery.
\end{itemize*}
\end{rmrk}

\begin{rmrk}[Atoms are infinite]
There are other ways to insist that $\atoms$ is infinite. 
A natural axiom would be $\varnothing\neq \mathbb N\hookrightarrow\atoms$, or in words: there exists a sets injection of natural numbers into $\atoms$. 
However, this would also imply that an infinite subset of $\atoms$ can be well-ordered, which is compatible with \rulefont{Choice} from Figure~\ref{fig.zfa} but incompatible with the FM-style \rulefont{Fresh} rule from Remark~\ref{rmrk.EZFA.AC}.

\rulefont{AtmInf} is implied by $\varnothing\neq\mathbb N\hookrightarrow\atoms$ but is more elementary, in that it retains consistency with both \rulefont{Choice} and \rulefont{Fresh}.
\end{rmrk}

\begin{rmrk}[A design alternative: Quine atoms]
\label{rmrk.quine}
\begin{enumerate}
\item
An alternative to \rulefont{AtmEmpty} is to use an axiom \rulefont{AtmQuine} that $a=\{a\}$ (these are called \emph{Quine atoms}). 
In the implementation of Fraenkel-Mostowski set theory in \cite{gabbay:newaas-jv} it was convenient to use Quine atoms because doing so removed the condition $s,s'\not\in\atoms$ from \rulefont{Extensionality} --- \rulefont{Induction} needs slightly adjustment instead, but this is used far less often \cite{gabbay:thesis}.
\item\label{quite.not.subset}
One might prefer Quine atoms because with Figure~\ref{fig.zfa} as written atoms are empty and thus subsets of every element.
This is arguably undesirable (and was technically inconvenient in the implementation), and Quine atoms avoid it. 
\item
Some people are suspicious of Quine atoms because they make the universe non-wellfounded.
This is unnecessary: the non-wellfoundedness that Quine atoms introduce is so mild as to be negligible, and even were this not the case, fears of non-wellfounded sets may sometimes be overblown.
\end{enumerate}
\end{rmrk}

This concludes the presentation of the syntax and axioms of ZFAC.
We will discuss \rulefont{Equivar} in Remark~\ref{rmrk.read.equivar} and Section~\ref{sect.more.on.equivar}.
First, we need to develop some terminology and a theory of denotation:

%%%%%%%%%%%%%%%%%%%%%%%%%%%%%%%%%%%%%%%%%%%
\subsection{Atoms, sets, and denotations}

\begin{nttn}
\begin{itemize*}
\item
If $a\in\atoms$ then call $a$ an \deffont{atom}.
\item
If $x\not\in\atoms$ then call $x$ \deffont{a set}.  
\item
We call an $x$ that is either an atom or a set, an \deffont{element}.
\end{itemize*}
\end{nttn}

\begin{rmrk}
So:
\begin{itemize}
\item
\rulefont{AtmEmpty} from Figure~\ref{fig.zfa} states that atoms are empty (cf. Remark~\ref{rmrk.quine}(\ref{quite.not.subset})), and 
\item
\rulefont{EmptySet} and \rulefont{Extensionality} imply that the only empty \emph{set} is $\varnothing$.
\end{itemize}
\end{rmrk}

\begin{defn}
\begin{itemize*}
\item
Define \deffont{free variables} $\fv(s)$ and $\fv(\phi)$ as usual.
For example, $\fv(\Forall{\vara}(\vara = \varb))=\{\varb\}$.
\item
If $\fv(s)=\varnothing$ or $\fv(\phi)=\varnothing$ (so $s$ or $\phi$ have no free variables) then call them \deffont{closed}.
\end{itemize*}
\end{defn}

We introduce a standard notation:
\begin{defn}
\label{defn.enrich}
\begin{enumerate}
\item
Suppose $\mathfrak M$ is a model of ZFA.

Enrich the term and predicate language from Definition~\ref{defn.fol} by admitting elements $x\in \mathfrak M$ as constants ($x$ the constant denotes $x$ in $\mathfrak M$; see Definition~\ref{defn.denotation}(\ref{denotation.x})).
\item
Call terms in this new language \deffont{$\mathfrak M$-terms} and \deffont{$\mathfrak M$-predicates}.
\item\label{enrich.pure}
If $s$ or $\phi$ do \emph{not} mention any of these additional contants from $\mathfrak M$, then call them \deffont{pure}. 
\end{enumerate}
\end{defn}

\begin{defn}
\label{defn.denotation}
Suppose $\mathfrak M$ is a model of ZFA and suppose $s$ is a closed $\mathfrak M$-term and $\phi$ is a closed $\mathfrak M$-predicate.
Then define \deffont{denotations} 
$$
\model{s}_{\mathfrak M}
\quad\text{and}\quad 
\mathfrak M\ment\phi
$$ 
by the usual inductive definition.
We give a relevant selection of cases:
\begin{enumerate*}
\item
$\mathfrak M\ment y\in x$ when $y\in_{\mathfrak M}x$; that is, when $(y,x)$ is in the $\in_\mathfrak M$ relation that we assumed when we wrote `suppose $\mathfrak M$ is a model of ZFA' at the start of this Definition. 
\item
$\mathfrak M\ment x=x'$ when $x=x'$.\footnote{So we interpret equality by literal identity.  This is a standard design choice but not a necessary one.  We could introduce an `equality relation' $=_{\mathfrak M}$ instead, but then we would need axioms saying that $\mathfrak M$ cannot distinguish $\mathfrak M$-equal elements.}
\item
$\mathfrak M\ment\Forall{\vara}\phi$ when $\mathfrak M\ment\phi[\vara\ssm x]$ for every $x\in\mathfrak M$.
\item\label{denotation.x}
$\model{x}_{\mathfrak M}=x$ so that our extra constants denote themselves in $\mathfrak M$. 
\item
$\model{\{\vara\in s\mid \phi\}} = \{x\in\model{s}_{\mathfrak M} \mid \mathfrak M\ment\phi[\vara\ssm x]\}_{\mathfrak M}$; such an element exists in $\mathfrak M$, by \rulefont{Comprehension}.
\item
\dots and so forth.
\end{enumerate*} 
\end{defn}

\subsection{Pairs and permutations}

We recall some standard terminology:
\begin{defn}
\label{defn.ordered.pairs}
\begin{enumerate}
\item
The Kuratowski \deffont{ordered pair} of $s$ and $t$ is defined by
$$
(s,t) = \{\{s,t\},\{s\}\}.
$$
Thus if $x,y\in\mathfrak M$ then this is reflected in the model as
$$
(x,y) = \{\{x,y\},\{x\}\} \in\mathfrak M.
$$
\item
A \deffont{function-set} is a set of ordered pairs that defines the graph of a function. 
\end{enumerate}
\end{defn}

\begin{defn}
\label{defn.pi}
\begin{enumerate}
\item
Let `is a \deffont{permutation}' be the assertion `is a function-set on atoms that is bijective'; or more directly: a permutation is a bijection on atoms.

Thus given a model $\mathfrak M$, an \deffont{$\mathfrak M$-permutation} is an element $\pi\in\mathfrak M$ of which the predicate `this element is a function-set that is a bijection on atoms' is valid in $\mathfrak M$.  
\item\label{pi.perm}
Write $\perm$ for the \deffont{set of permutations}.
In symbols:
$$
\perm = \{\pi \in \atoms\to\atoms\mid \text{$\pi$ is a permutation} \} .
$$
\item
Write $\id$ for the \deffont{identity} permutation such that $\id(a)=a$ for all $a$.
\item
Write $\pi'\circ\pi$ for composition, so that $(\pi'\circ\pi)(a)=\pi'(\pi(a))$.
\item
Write $\pi^\mone$ for the inverse of $\pi$, so that $\pi^\mone\circ\pi=\id=\pi\circ\pi^\mone$.
\item
If $a,b{\in}\atoms$ then write $(a\;b)$ for the \deffont{swapping} (terminology from \cite{gabbay:newaas-jv}) mapping $a$ to $b$,\ $b$ to $a$,\ and all other $c$ to themselves, and take $(a\;a)=\id$.
\end{enumerate}
\end{defn}

\begin{rmrk}
Given a model $\mathfrak M$ we obtain an element $\perm_\mathfrak M$ which represents the bijections of $\atoms_{\mathfrak M}$ that are represented in $\mathfrak M$.
There may be bijections of $\atoms_{\mathfrak M}$ that cannot be represented as elements of $\mathfrak M$; these are simply not function-sets inside the model.
This is a standard distinction.

More detail on design choices for $\perm$ are in Subsection~\ref{subsect.on.permutations}.
\end{rmrk}

\begin{defn}[The pointwise action]
\label{defn.permutation.action}
Given a permutation $\pi\in\f{Perm}$ we define a \deffont{pointwise (atoms-)permutation action} $\pi\act x$ by $\in$-induction as follows:
$$
\begin{array}{r@{\ }l@{\quad}l}
\pi\act a=&\pi(a) & a\in\atoms
\\
\pi\act X=&\{\pi\act x \mid x\in X\} & X\text{ a set}
\end{array}
$$
\end{defn}

\begin{rmrk}
\label{rmrk.read.equivar}
We now have the machinery needed to read axiom \rulefont{Equivar} from Figure~\ref{fig.equivar}: given any predicate in the language from Definition~\ref{defn.fol}, \rulefont{Equivar} asserts that validity is unaffected by uniformly permuting atoms in the free parameters of that predicate.

In other words, \rulefont{Equivar} states that atoms can be permuted provided we do so consistently in all parameters.
For instance if we have proved $\phi(a,b,c)$, then 
\begin{itemize*}
\item
taking $\pi=(a\,c)$ we also know by \rulefont{Equivar} that $\phi(c,b,a)$ and 
\item
taking $\pi=(a\,a')(b\,b')(c\,c')$ we also know by \rulefont{Equivar} that $\phi(a',b',c')$, but 
\item
\rulefont{Equivar} does not tell us that $\phi(a,b,a)$; this may still hold, just not by equivariance because no permutation takes $(a,b,c)$ to $(a,b,a)$.
\end{itemize*}
\end{rmrk}

We now come to a rather subtle observation:
\begin{rmrk}[The dual nature of atoms]
\label{rmrk.dual.nature}
\rulefont{Equivar} expresses that atoms have a dual nature: 
\begin{itemize*}
\item
individually, atoms behave like pointers to themselves --- if we think of $a$ as a pointer just to itself\footnote{If atoms are Quine atoms as per Remark~\ref{rmrk.quine} then this is literally true, in the sense that $a=\{a\}$.} --- but
\item
collectively, atoms have the flavour of variables ranging permutatively over the set of all atoms --- in the sense that, for example, to know that some ZFA assertion holds of a particular $a$, $b$, and $c$ is (provided these are all the atoms mentioned in our assertion) \emph{the same thing} as knowing that it holds of any three distinct atoms.\footnote{This too can be made precise: see Subsection~2.6 and Lemma~4.17 of \cite{gabbay:pnlthf}.} 
\end{itemize*}
We will continue this discussion in Remark~\ref{rmrk.already.e}.
\end{rmrk}

\subsection{Permutation is a group action}

We take a moment to note that the permutation action is a group action on the sets universe, and then we will study equivariance in more detail in Section~\ref{sect.more.on.equivar}.

\begin{lemm}
\label{lemm.perm.group.action}
Suppose $\pi,\pi'\in\mathfrak M$ are permutations and $x\in\mathfrak M$ is any element.
Then we have:
$$
\begin{array}{r@{\ }l}
\id\act x=&x
\\
\pi\act(\pi'\act x)=&(\pi\circ\pi')\act x .
\end{array}
$$
\end{lemm}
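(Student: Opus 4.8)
The plan is to prove each of the two equations by $\in$-induction on $x$, justified by \rulefont{Induction} from Figure~\ref{fig.zfa}. Since the action of Definition~\ref{defn.permutation.action} is itself defined by $\in$-induction with a case split on whether $x$ is an atom or a set, I follow the same case split. The feature that makes the induction go through is that in the set case the action is a comprehension over the elements $y\in X$, so each such $y$ is $\in$-below $X$ and the inductive hypothesis is available.

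For $\id\act x = x$: in the base case $x=a\in\atoms$, I compute $\id\act a = \id(a) = a$ directly from the atom clause of the action and the definition of $\id$. In the step case $x=X$ is a set, so $\id\act X = \{\id\act y \mid y\in X\}$; applying the inductive hypothesis $\id\act y = y$ to each $y\in X$ collapses this to $\{y\mid y\in X\}$, which is $X$ by \rulefont{Extensionality}.

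For $\pi\act(\pi'\act x) = (\pi\circ\pi')\act x$: in the base case $x=a\in\atoms$, I use that $\pi'\act a = \pi'(a)$ is again an atom, so the atom clause applies a second time and gives $\pi\act(\pi'\act a) = \pi(\pi'(a))$; by the definition of composition in Definition~\ref{defn.pi} this equals $(\pi\circ\pi')(a) = (\pi\circ\pi')\act a$. In the step case $x=X$ is a set, $\pi'\act X = \{\pi'\act y \mid y\in X\}$ is again a set, so the set clause applies to it, giving $\pi\act(\pi'\act X) = \{\pi\act z \mid z\in\pi'\act X\} = \{\pi\act(\pi'\act y)\mid y\in X\}$. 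The inductive hypothesis rewrites each $\pi\act(\pi'\act y)$ as $(\pi\circ\pi')\act y$, yielding $\{(\pi\circ\pi')\act y\mid y\in X\} = (\pi\circ\pi')\act X$.

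The only point needing a moment's care — and the one place where the two clauses of the action interact — is that the action sends atoms to atoms and sets to sets, so that in the composite $\pi\act(\pi'\act x)$ the second application falls into the same clause (atom or set) as the first. This is immediate from the definition, since $\pi'\act a = \pi'(a)\in\atoms$ and $\pi'\act X$ is a comprehension and hence a set, but it is exactly what makes the base and step cases align. I expect no genuine obstacle: the statement is the routine verification that this well-founded recursive definition respects the group structure of $\f{Perm}$.
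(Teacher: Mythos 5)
Your proof is correct and is exactly the argument the paper has in mind: the paper's own proof is just the one line ``By a routine $\in$-induction on $\mathfrak M$'', and you have spelled out that routine induction, including the one genuinely relevant observation that the action sends atoms to atoms and sets to sets so the clauses align in the composite. No gaps.
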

\begin{proof}
By a routine $\in$-induction on $\mathfrak M$.
\end{proof}

\begin{corr}
\label{corr.pi.univ.bijection}
Suppose $\pi\in\f{Perm}\in\mathfrak M$ is a permutation.
Then the $\pi$-action on $\mathfrak M$ 
$$
x\in\mathfrak M\longmapsto \pi\act x\in\mathfrak M
$$ 
is a bijection. 
\end{corr}
\begin{proof}
From Lemma~\ref{lemm.perm.group.action} noting that $\pi\act(\pi^\mone\act x)=x$.
\end{proof}

\section{Equivariance}
\label{sect.more.on.equivar}

\subsection{Some helpful notation}

\begin{nttn}
\label{nttn.syntax.action}
Suppose $\mathfrak M$ is a model of ZFA.
Recall from Definition~\ref{defn.enrich} that $\mathfrak M$-terms are syntax enriched with elements of $\mathfrak M$ as constants denoting themselves.

Suppose $\phi$ is a $\mathfrak M$-predicate and $s$ and $\pi$ are $\mathfrak M$-terms --- writing `$s$ and $t$' would be more principled, but the reader will soon see why we write the second term $\pi$.

Then define 
$$
\pi\mact\phi
\quad\text{and}\quad \pi\mact s
$$ 
by:
\begin{enumerate}
\item
$\pi\mact\phi$ is that predicate obtained by replacing every free variable $\vara$ in $\phi$ with $\pi\act\vara$, and every $x\in\mathfrak M$ in $\phi$ with $\pi\act x$.
\item 
$\pi\mact s$ is that term obtained by replacing every free variable $\vara$ in $s$ with $\pi\act\vara$, and every $x\in\mathfrak M$ in $s$ with $\pi\act x$.
\end{enumerate}
An inductive definition would be routine to write out.
\end{nttn}

\begin{xmpl}
For example, if $\mathfrak M$ is a model of ZFA and $\pi\in\f{Perm}\in\mathfrak M$ is a permutation, then 
$$
\pi\mact \Forall{\vara}(x\in\vara\land x\not\in\varb)
\is
\Forall{\vara}(\pi\act x\in\vara\land \pi\act x\not\in\pi\act \varb) .
$$
\end{xmpl}

We link Notation~\ref{nttn.syntax.action} to \rulefont{Equivar} from Figure~\ref{fig.equivar}:
\begin{lemm}
\label{lemm.compact.equivar}
Suppose $\phi$ is a pure predicate (so mentions no elements of $\mathfrak M$) and $s$ is a pure term, with free variables $\vara_1,\dots,\vara_n$.
Then $\pi\mact\phi$ and $\pi\mact s$ can be rewritten more explicitly as follows:
$$
\begin{array}{r@{\ }l}
\pi\mact\phi \is& \phi[\vara_1\ssm\pi\act\vara_1,\dots,\vara_n\ssm\pi\act\vara_n] 
\\
\pi\mact s \is& s[\vara_1\ssm\pi\act\vara_1,\dots,\vara_n\ssm\pi\act\vara_n] 
\end{array}
$$
As a corollary, \rulefont{Equivar} from Figure~\ref{fig.equivar} can be rewritten in the following more compact form:
\begin{frameqn}
\hspace{-15em}\rulefont{Equivar}\hspace{6em}
\Forall{\vara{\in}\f{Perm}}(\phi\liff\vara\mact\phi) .
\end{frameqn}
\end{lemm}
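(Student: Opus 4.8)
The plan is to prove both displayed equations by a single simultaneous structural induction on the grammars of terms $s$ and predicates $\phi$ from Definition~\ref{defn.fol}, comparing at each stage the recursive clauses defining the syntactic action $\mact$ of Notation~\ref{nttn.syntax.action} against the corresponding clauses for capture-avoiding simultaneous substitution. The observation driving the whole argument is that Notation~\ref{nttn.syntax.action} stipulates \emph{two} replacements---every free variable $\vara$ is sent to $\pi\act\vara$, and every constant $x\in\mathfrak M$ is sent to $\pi\act x$---and that the purity hypothesis makes the second replacement vacuous, since a pure $\phi$ (resp.\ $s$) mentions no elements of $\mathfrak M$. What remains is exactly the operation of replacing each free variable by its $\pi$-image; and because $\fv(\phi)=\{\vara_1,\dots,\vara_n\}$ (resp.\ $\fv(s)=\{\vara_1,\dots,\vara_n\}$), those free variables are precisely $\vara_1,\dots,\vara_n$, so this operation is by definition the simultaneous substitution $[\vara_1\ssm\pi\act\vara_1,\dots,\vara_n\ssm\pi\act\vara_n]$ displayed on the right.

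Concretely I would run the mutual induction as follows. In the base cases a variable $\vara_i$ is sent to $\pi\act\vara_i$ by both operations, while the pure, variable-free constant terms $\varnothing$ and $\atoms$ are left untouched by both. The compound term-formers $\{s,t\}$, $\powerset(s)$, $\bigcup s$ and the atomic and propositional predicates $s=t$, $t\in s$, $\bot$, $\phi\land\phi$ are dispatched by pushing both operations through the constructor and invoking the induction hypothesis on immediate subterms and subformulae. The only cases needing genuine care are the two binding constructs, $\forall\vara.\phi$ and the comprehension $\{\vara\in t\mid\phi\}$, in which the bound variable leaves the set of free variables: here one must check that $\mact$ and simultaneous substitution descend under the binder in the same way, each leaving the bound occurrence untouched (up to $\alpha$-renaming) and each avoiding capture of the permutation term $\pi$. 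Since the two operations treat the binder identically, the induction hypothesis on $\phi$ (and, for comprehension, separately on $t$, where $\vara$ does not bind) closes these cases.

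I expect these binder cases to be the only real point of friction, and they are where I would be most careful: I must ensure that the $\alpha$-renaming convention used to define $\mact$ on $\forall\vara.\phi$ is exactly the capture-avoidance convention used for substitution, and in particular that the permutation term $\pi$---which in an instance of \rulefont{Equivar} is the quantified permutation variable, free in the substituted expressions $\pi\act\vara_i$---is fresh for $\phi$. Under the standing convention that $\fv(\phi)=\{\vara_1,\dots,\vara_n\}$ excludes the quantified permutation variable, this freshness holds, so substituting $\pi\act\vara_i$ beneath a binder of $\phi$ cannot capture $\pi$ and the two operations coincide.

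Finally, the corollary is immediate once both equations are established. In the axiom \rulefont{Equivar} of Figure~\ref{fig.equivar} the quantified variable $\vara$ plays the role of $\pi$, and its right-hand conjunct is literally $\phi[\vara_1\ssm\vara\act\vara_1,\dots,\vara_n\ssm\vara\act\vara_n]$, which the lemma identifies with $\vara\mact\phi$. Rewriting the substitution accordingly collapses the scheme to $\Forall{\vara{\in}\f{Perm}}(\phi\liff\vara\mact\phi)$, the boxed compact form.
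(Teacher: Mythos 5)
Your proposal is correct and matches the paper's intent: the paper dismisses this lemma with the single line ``A fact of syntax,'' and your structural induction is precisely the routine unfolding that line gestures at, with the key observation (purity makes the constant-replacement clause of $\mact$ vacuous, leaving exactly the simultaneous substitution on the free variables) correctly identified. Your extra care over the binder cases and capture-avoidance is sensible but does not change the substance.
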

\begin{proof}
A fact of syntax.
\end{proof}

\begin{rmrk}
The \rulefont{Equivar} from Lemma~\ref{lemm.compact.equivar} has two advantages over the \rulefont{Equivar} in Figure~\ref{fig.equivar}: it is more compact, and if we extend our language we can cleanly extend the axiom-scheme by fine-tuning how $\mact$ behaves on the new terms.
See Remarks~\ref{rmrk.ezfa.comment} and~\ref{rmrk.ezfa.choice.terms}.

The only disadvantage to the \rulefont{Equivar} from Lemma~\ref{lemm.compact.equivar} is that we do need to define $\mact$ first. 
\end{rmrk}

\subsection{Equivariance was there all along}

\begin{lemm}
\label{lemm.base.equivar}
Suppose $\mathfrak M$ is a model of ZFA(C).
Then:
\begin{enumerate*}
\item
$\mathfrak M\ment \pi\act y\in\pi\act x$ if and only if $\mathfrak M\ment y\in x$.
\item
$\mathfrak M\ment \pi\act x =\pi\act y$ if and only if $\mathfrak M\ment x=y$.
\item
$\mathfrak M\ment \pi\act x \subseteq\pi\act y$ if and only if $\mathfrak M\ment x\subseteq y$.
\end{enumerate*}
\end{lemm}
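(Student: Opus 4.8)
The plan is to reduce all three equivalences to two facts already in hand: the recursive definition of the permutation action (Definition~\ref{defn.permutation.action}) and the fact that $x\mapsto\pi\act x$ is a bijection on $\mathfrak M$ (Corollary~\ref{corr.pi.univ.bijection}). I would prove the equality claim first, then the membership claim, and finally derive the subset claim from the membership claim. For the equality claim, I would simply invoke Corollary~\ref{corr.pi.univ.bijection}: a bijection is in particular injective, so $\pi\act x=\pi\act y$ forces $x=y$, while the reverse implication holds because the action is a single-valued function. No atom-versus-set case analysis is needed here.

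For the membership claim, I would split on whether $x$ is an atom or a set. If $x\in\atoms$, then $\pi\act x=\pi(x)\in\atoms$, so by \rulefont{AtmEmpty} both $\mathfrak M\ment\pi\act y\in\pi\act x$ and $\mathfrak M\ment y\in x$ are false and the equivalence holds vacuously. If $x$ is a set, then Definition~\ref{defn.permutation.action} gives $\pi\act x=\{\pi\act z\mid z\in x\}$, so $\pi\act y\in\pi\act x$ holds exactly when $\pi\act y=\pi\act z$ for some $z\in x$; by injectivity of the action (Corollary~\ref{corr.pi.univ.bijection}) this is equivalent to $y=z$ for some $z\in x$, that is, to $y\in x$.

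For the subset claim, I would unfold $\subseteq$ as its defining predicate $\Forall{\vara}(\vara\in x\limp\vara\in y)$ and reduce to the membership claim. Assuming $x\subseteq y$, take any $w$ with $w\in\pi\act x$ and write $w=\pi\act z$ where $z=\pi^\mone\act w$ (using that the action is onto); the membership claim gives $z\in x$, hence $z\in y$, hence $w=\pi\act z\in\pi\act y$ by the membership claim again. The converse direction runs the same argument backwards, now using the membership claim to transport $z\in x$ to $\pi\act z\in\pi\act x$.

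The main obstacle is bookkeeping rather than mathematical depth, and it is concentrated in exactly one place: the quantifier in $\pi\act x\subseteq\pi\act y$ ranges over all of $\mathfrak M$, so to apply the membership claim I must know that every such element has the form $\pi\act z$. This is precisely why Corollary~\ref{corr.pi.univ.bijection} (surjectivity, not merely injectivity) is needed for the subset claim, whereas the equality and membership claims only use injectivity. The remaining care is simply to keep the atom/set distinction straight in the membership claim so that \rulefont{AtmEmpty} is invoked in the correct case; once the action is seen to be a bijection that commutes with $\in$ by construction, all three statements are near-immediate.
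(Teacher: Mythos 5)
Your proposal is correct and matches the paper's own (one-line) proof, which simply says the result follows by a routine induction on $\mathfrak M$ from exactly the two ingredients you use: the bijectivity of the action (Corollary~\ref{corr.pi.univ.bijection}) and the pointwise definition of the action on sets (Definition~\ref{defn.permutation.action}), with \rulefont{AtmEmpty} disposing of the atom case. Your write-up is just a fleshed-out version of that sketch --- the observation that the global bijectivity lets you avoid a fresh $\in$-induction is fine, since the inductive work was already done in Lemma~\ref{lemm.perm.group.action}.
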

\begin{proof}
By a routine induction on $\mathfrak M$ using the fact that:
\begin{itemize*}
\item
By Corollary~\ref{corr.pi.univ.bijection} $\pi$ acts bijectively on atoms, and
\item
by construction in Definition~\ref{defn.permutation.action} $\pi$ acts pointwise on sets.
\qedhere\end{itemize*}
\end{proof}

\begin{rmrk}
Note that the $\in$-induction on $\mathfrak M$ in Lemma~\ref{lemm.base.equivar} is required; we cannot obtain this result just from the fact that $\pi$ is bijective.
To see why, let us (wrongly) define $\pi\act x$ by
$\pi\act x = \pi(x)$ if $x\in\atoms$, and
$\pi\act x =\varnothing$ if $x\not\in\atoms$;
then Lemma~\ref{lemm.base.equivar}(1) fails because if $a\in\atoms$ then $a\in\{a\}$ but $\pi(a)\not\in\pi\act\{a\}=\varnothing$ --- but we need to induct (by just one level) in $\mathfrak M$ to detect this.
\end{rmrk}

We can now observe that every model of ZFA or ZFAC \emph{already} satisfies the equivariance axiom-scheme \rulefont{Equivar} from Figure~\ref{fig.equivar}.
The result goes back to~\cite[Theorem~8.1.10]{gabbay:thesis}:
\begin{frametxt}
\begin{thrm}
\label{thrm.equivar.for.free}
\label{thrm.equivar}
If $\mathfrak M$ is a model of ZFA(C) then $\mathfrak M$ is also a model of EZFA(C).
\end{thrm}
\end{frametxt}
\begin{proof}
Examining \rulefont{Equivar} from Figure~\ref{fig.equivar} we see that it suffices to show that for every closed predicate $\phi$ and closed term $s$ (possibly mentioning elements of $\mathfrak M$): 
$$
\begin{array}{r@{\ }l}
\mathfrak M\ment&\phi\liff\pi\mact\phi
\quad\text{and}
\\
\mathfrak M\ment& \pi\act s = \pi\mact s.
\end{array}
$$
We reason by induction on syntax:
\begin{itemize}
\item
\emph{The cases of $t\in s$ and $s=t$.}\quad
From Lemma~\ref{lemm.base.equivar}.
\item
\emph{The case of $\Forall{\vara}\phi$.}\quad
Suppose $\mathfrak M\ment \Forall{\vara}\phi$. 
By Corollary~\ref{corr.pi.univ.bijection} the action of $\pi$ on $\mathfrak M$ is a bijection, so that $\mathfrak M\ment\phi[\vara\ssm\pi^\mone\act x]$ for every $x\in\mathfrak M$.
By inductive hypothesis it follows that $\mathfrak M\ment\pi\act(\phi[\vara\ssm \pi^\mone\act x])$ and thus $\mathfrak M\ment (\pi\mact\phi)[\vara\ssm x]$ for every $x\in\mathfrak M$.
Thus $\mathfrak M\ment\Forall{\vara}(\pi\mact\phi)$ and so $\mathfrak N\ment\pi\mact\Forall{\vara}\phi$ as required.
\item
\emph{The cases of $\varnothing$ and $\atoms$.}\quad
It is clear from the pointwise action in Definition~\ref{defn.permutation.action} that $\mathfrak M\ment\pi\act\varnothing=\varnothing$ and by assumption $\pi$ is a bijection on $\atoms$ so that again from Definition~\ref{defn.permutation.action} that $\mathfrak M\ment \pi\act\atoms=\atoms$. 
\item
\emph{The cases of $\tbot$ and $\land$.}\quad
Routine.
\item
Other cases are no harder; and we know they must work because everything else is definable in terms of $\in$, $=$, $\forall$, $\land$, and $\bot$.
\qedhere\end{itemize}
\end{proof}

A detailed discussion of the significance of Theorem~\ref{thrm.equivar} is in Subsection~\ref{subsect.introduce.equivariance}.

\subsection{An example}

When we define a function in nominal techniques, we almost always want to know that it commutes with the permutation action.
Broadly speaking we have three options:
\begin{itemize*}
\item
Check this by explicit calculations for every function that we define.
\item
Use Theorem~\ref{thrm.equivar}. 
\item
Use the axiom-scheme \rulefont{Equivar}. 
\end{itemize*}

We illustrate these three methods on a trio of simple but characteristic examples:
\begin{lemm}
\label{lemm.equivar.pairs}
\begin{enumerate}
\item
If $x,y\in\mathfrak M$ then
$$
\pi\act (x,y) = (\pi\act x,\pi\act y) .
$$
\item
If $x\in\mathfrak M$ then 
$$
\pi\act\powerset(x) = \powerset(\pi\act x) .
$$ 
\item
If 
$f\in(X\to Y)\in\mathfrak M$ 
then
$\pi\act f\in(\pi\act X\to \pi\act Y)\in\mathfrak M$.

In words: if $f$ is a function-set in $\mathfrak M$ from $X$ to $Y$ then $\pi\act f$ is a function-set in $\mathfrak M$ from $\pi\act X$ to $\pi\act Y$.
\end{enumerate}
\end{lemm}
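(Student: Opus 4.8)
The plan is to prove the three parts by three different routes, matching the three methods described just above the statement, since each part is a characteristic illustration of one of them. For part~(1), I would proceed by explicit calculation. Unfolding the Kuratowski encoding from Definition~\ref{defn.ordered.pairs}, we have $(x,y)=\{\{x,y\},\{x\}\}$, and applying the pointwise action of Definition~\ref{defn.permutation.action} gives $\pi\act\{\{x,y\},\{x\}\}=\{\pi\act\{x,y\},\pi\act\{x\}\}=\{\{\pi\act x,\pi\act y\},\{\pi\act x\}\}$, which is exactly the Kuratowski encoding of $(\pi\act x,\pi\act y)$. The only thing to check is that the pointwise action distributes over the finite sets $\{-,-\}$ and $\{-\}$, and this is immediate from the set clause of Definition~\ref{defn.permutation.action}.

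For part~(2), I would invoke Theorem~\ref{thrm.equivar} rather than calculate. The term $\powerset(x)$ is a closed $\mathfrak M$-term once $x$ is admitted as a constant, and the theorem gives $\mathfrak M\ment\pi\act\powerset(x)=\pi\mact\powerset(x)$. By the syntactic action of Notation~\ref{nttn.syntax.action}, $\pi\mact\powerset(x)$ is obtained by replacing the constant $x$ with $\pi\act x$, that is, it is $\powerset(\pi\act x)$, and the result follows. As an alternative one can argue directly: $z\subseteq x$ if and only if $\pi\act z\subseteq\pi\act x$ by Lemma~\ref{lemm.base.equivar}, and $\pi$ acts bijectively by Corollary~\ref{corr.pi.univ.bijection}, so $\{\pi\act z\mid z\subseteq x\}=\{w\mid w\subseteq\pi\act x\}$.

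For part~(3), I would use the axiom-scheme \rulefont{Equivar} directly. The assertion that $f$ is an injective function-set from $X$ to $Y$ can be written as a pure predicate $\phi$ whose free variables are exactly $f$, $X$, $Y$: functionality, the domain and codomain conditions, and injectivity are all expressible in terms of $\in$, $=$, ordered pairs, and quantifiers. Since $\mathfrak M$ is a model of ZFA(C), by Theorem~\ref{thrm.equivar} it satisfies \rulefont{Equivar}, so from the hypothesis on $f$ and the permutation $\pi$ we obtain the same predicate holding of $\pi\act f$, $\pi\act X$, $\pi\act Y$, which is the claim.

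The main obstacle is not really in the calculations, all of which are short, but in confirming for part~(3) that `injective function-set from $X$ to $Y$' genuinely unwinds to a pure first-order predicate in the three parameters $f$, $X$, $Y$, so that \rulefont{Equivar} applies with exactly those free variables. This in turn rests on part~(1): the elements of $f$ are Kuratowski pairs, and it is part~(1) that guarantees the permutation action sends a pair $(x,y)\in f$ to the pair $(\pi\act x,\pi\act y)$, so that the graph-of-a-function structure is preserved under the action. Once expressibility is granted, \rulefont{Equivar} does all the work and no further computation is needed.
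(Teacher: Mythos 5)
Your proposal is correct and uses exactly the same ingredients as the paper's proof: the paper first dispatches all three parts as ``immediate from Theorem~\ref{thrm.equivar.for.free} or \rulefont{Equivar}, as we prefer'' and then sketches the same explicit calculations you give (unfolding the Kuratowski pair, the $\pi^\mone$ trick for $\powerset$, and the pointwise action on the graph of $f$). The only difference is presentational---you assign one method to each part, whereas the paper applies the one-line equivariance argument to all three and then writes out all three calculations---so this counts as essentially the same approach.
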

\begin{proof}
Immediate from Theorem~\ref{thrm.equivar.for.free} or \rulefont{Equivar} as we prefer, for the pure (Definition~\ref{defn.enrich}(\ref{enrich.pure})) syntax 
\begin{itemize*}
\item
$s=(\vara_1,\vara_2)$, giving us $\pi\act (\vara_1,\vara_2)=(\pi\act\vara_1,\pi\act\vara_2)$ immediately,\footnote{One could object here that ordered pairing was not a primitive term-former in the syntax of the language of sets with atoms in Definition~\ref{defn.fol} (though powersets are), so that the pure term $(\vara_1,\vara_2)$ does not exist. 
This is not a serious problem: we could add it; in an implementation there would be facilities to extend syntax with convenient term-formers; we could add unique choice (how to do this is in Remark~\ref{rmrk.ezfa.choice.terms}); or we can use the predicate form of equivariance instead and expand definitions.  
} 
 and similarly
\item
$s=\powerset(\vara)$ and
\item
$\vara \tin (\varb\to\varc)$.
\end{itemize*}
If the reader prefers a predicate form for the first two cases above, we can take pure predicates $\varb = (\vara_1,\vara_2)$ and $\varb = \powerset(\vara)$.
Then we need a bit of trivial reasoning, illustrated for pairing as follows:
$$
\begin{array}{l@{\quad}l}
\varb = (\vara_1,\vara_2) \liff \pi\act\varb = (\pi\act\vara_1,\pi\act\vara_2)
& \text{Equivariance}
\\
(\vara_1,\vara_2) = (\vara_1,\vara_2) \liff \pi\act(\vara_1,\vara_2) = (\pi\act\vara_1,\pi\act\vara_2)
& \text{Instantiate $\varb$}
\\
\pi\act(\vara_1,\vara_2) = (\pi\act\vara_1,\pi\act\vara_2)
& \text{Equality reflexive}
\end{array}
$$
Notice how well this scales with $s$ and $\phi$; there is essentially no extra work to be done here for larger $s$ and $\phi$, because the reasoning is independent of their logical content and depends only on their syntactic size.

For the reader's convenience we now sketch the explicit calculations that this abstract result corresponds to for the three specific cases of this result.
We reason using the pointwise action from Definition~\ref{defn.permutation.action}, and Lemmas~\ref{lemm.perm.group.action} and~\ref{lemm.base.equivar}, and Definition~\ref{defn.ordered.pairs}, as follows: 
$$
\begin{array}[t]{r@{\ }l}
\pi\act(x,y) =& \pi\act\{\{x,y\},\{x\}\}
\\
=&\{\{\pi\act x,\pi\act y\},\{\pi\act x\}\}
\\
=&(\pi\act x,\pi\act y)
\end{array}
\ 
\begin{array}[t]{r@{\ }l}
\pi\act\powerset(x) =&\pi\act\{x'\mid x'\subseteq x\}
\\
=&\{\pi\act x'\mid x'\subseteq x\}
\\
=&\{x' \mid \pi^\mone\act x'\subseteq x\}
\\
=&\{x' \mid x'\subseteq \pi\act x\}
\\
=&\powerset(\pi\act x)
\end{array}
\ 
\begin{array}[t]{r@{\ }l}
\pi\act f 
=& \{\pi\act (x,y) \mid (x,y)\in f\}
\\
=& \{(\pi\act x,\pi\act y) \mid (x,y)\in f\}
\\
&\text{\emph{\dots more reasoning here.}}
\end{array}
$$ 
The simplest case of pairs is straightforward enough, but even for powersets we see some nontrivial reasoning is required, driven by the semantic content of $\powerset$ (e.g. the introduction of subsets and inverse permutations), and not just by the syntactic form `term-former applied to variable' $\powerset(\vara)$.
As $\phi$ and $s$ get more complicated, this reasoning tends to grow, even though we know it must always succeed; in contrast, the use of an equivariance axiom is always simple and direct. 
\end{proof}

\subsection{Equivariance and Choice}
\label{subsect.equivar.and.choice}

\begin{rmrk}
\label{rmrk.Choice.1}
\rulefont{Choice} is not mentioned in Theorem~\ref{thrm.equivar}.
How can Choice be compatible with equivariance; surely making arbitrary choices is inherently non-equivariant?

Not if the choices are made inside $\mathfrak M$:
from Lemma~\ref{lemm.equivar.pairs} (or direct from Theorem~\ref{thrm.equivar}) 
$$
f\in(\powerset^*(x)\to x)
\quad\text{implies}\quad
\pi\act f\in \powerset^*(\pi\act x)\to \pi\act x.
$$
In words: if $f$ is a choice function for $x$ in $\mathfrak M$ then $\pi\act f$ is a choice function for $\pi\act x$ in  $\mathfrak M$.
We just permute atoms pointwise in the choice functions.
\end{rmrk}

\begin{rmrk}
Continuing Remark~\ref{rmrk.Choice.1}, the following statements are consistent with EZFA (and are derivable in EZFAC):
\begin{enumerate*}
\item
``There exists a total ordering on $\mathbb A$''.
\item
``Every set can be well-ordered (even if the set mentions atoms)''.
\end{enumerate*}
More on this in Remark~\ref{rmrk.EZFA.AC}.
\end{rmrk}

\subsection{What is equivariance}
\label{subsect.introduce.equivariance}

We continue the discussion from Remarks~\ref{rmrk.read.equivar} and~\ref{rmrk.dual.nature}:
\begin{rmrk}
\label{rmrk.already.e}
Equivariance appears in this paper twice: 
\begin{itemize*}
\item
as an axiom-scheme \rulefont{Equivar} in Figure~\ref{fig.equivar} and 
\item
as a ZFA theorem in Theorem~\ref{thrm.equivar}.
\end{itemize*}
Each can be derived from the other.
So which should we take as primitive: the axiom-scheme or the Theorem?

\rulefont{Equivar}-the-axiom-scheme is derivable by Theorem~\ref{thrm.equivar}, just like explicit pairset, powerset, union, comprehension, and emptyset terms in Definition~\ref{defn.fol} are derivable (see Remark~\ref{rmrk.redundancy}).
But being derivable is not the same as being useless.
Consider:
\end{rmrk}

\begin{rmrk}[Equivariance is native to ZFA]
\label{rmrk.equivar.native}
Equivariance is what we might call a \emph{native ZFA concept}: it clearly naturally inhabits a ZFA universe. 

What this means in practice is that some readers find equivariance hard to understand, and pointing to equivariance-the-theorem in Theorem~\ref{thrm.equivar} does not help as much as it should because the difficulty is not the mathematics so much as in the background assumptions which the reader brings to the paper.

It seems unintuitive; surely there must be a mistake somewhere: ``We can't just permute elements.  Suppose atoms are numbers: then are you claiming $1<2$ if and only if $2<1$?''.
This is a silly objection from inside ZFA --- where ``Suppose atoms are numbers'' makes no sense, because they are just not --- but from inside ZF, in which atoms \emph{have} to \emph{be} something else that is not atoms, we can see why such questions are asked.
The reader's experience may differ but for me the background assumption that atoms cannot just \emph{be} atoms is, to this day, an impediment to explaining nominal techniques.\footnote{\label{footnote.cat} Maths may be a pure subject, but a good maths \emph{paper} needs to be correct and also \emph{convincing}.  
It may therefore be helpful to understand and document some practical difficulties encountered while communicating nominal techniques --- usually to people whose viewpoint is very much ZF-shaped.

Case in point: while describing nominal techniques to a category-theorist colleague I mentioned $\atoms$ and he asked `but what \emph{are} atoms?'.  I said atoms were atoms; he refused to accept this, and I was beaten back to setting $\atoms$ to $\mathbb N$ just so the conversation could make progress.  
From this and the rest of the discussion I concluded that my category-theoretic colleague --- \emph{even though he was happy to switch toposes in mid-sentence} --- was actually a ZF fundamentalist.  
If it was not translated all the way down to ZF, then it was incomprehensible.
This paper might not cure him, but it provides a better answer to his question than I was able to provide at the time: we are in ZFA.

Another case in point: in a paper on permissive-nominal techniques, in which we assume a set of atoms $\atoms$ and a partition of $\atoms$ into two infinite halves $\atoms^<$ and $\atoms^>$, a referee refused to allow the paper to be published.  The sticking point was whether the partition was computable; in ZF, atoms have to be something else, so it could be that $\atoms=\mathbb N$ and $\atoms^<$ is `accidentally' an uncomputable set.  
We could have asserted that $\atoms^<$ was computable, but this would validate a ZF-style premise that there must necessarily be internal structure to atoms; that is, that atoms are not \emph{really} atoms.  
The problem was solved by assuming $\atoms^<$ and $\atoms^>$ \emph{first} and then defining $\atoms=\atoms^<\cup\atoms^>$.  This seemed to work: the referee was satisfied and the paper published. 
} 

EZFA can serve as a practical device with which to get talking about nominal material.
When asked ``What are atoms?'' we can simply answer ``elements of $\atoms$ in EZFA''; and when asked ``Why equivariance?'' we can simply answer ``Because we are in EZFA, and it is an axiom.''.
And when asked ``What else?'' we can simply answer ``That's it.''.
\end{rmrk}

\begin{rmrk}[Equivariance is a natural axiom]
\label{rmrk.natural.axiom}
If we are working in a theorem-prover, the equivariance axiom-scheme assumes a particular importance.

The problem is that while in principle every instance of the axiom-scheme \rulefont{Equivar} can be derived, in practice the cost of proving all those instances from first principles \`a la Theorem~\ref{thrm.equivar} scales up with the complexity of $\phi$, and it is not trivial to automate.

My PhD thesis contained an implementation of FM set theory inside Isabelle, and in the end this issue of proving countless instances of Theorem~\ref{thrm.equivar} caused the development to stall.

The irony of an implementation of nominal techniques stalling due to the cost of proving renaming lemmas was not lost on me, and I concluded that it was very important inside a nominal theorem-prover that equivariance have constant cost to the user, regardless of the nature of or size of the predicate $\phi$. 

Assuming an axiom \rulefont{Equivar} is an effective way to reduce the cost of renaming to \emph{constant} effort, namely, the cost of invoking the axiom.
If instead implementation requires explicit proofs of equivariance properties in the style of Theorem~\ref{thrm.equivar}, then our implementation needs to make sure that all instances of this Theorem are proved in a fully automated manner, achieving the same effect \emph{as if} the Theorem were an Axiom.
In this sense, we can say that from the point of view of implementation Equivariance is a natural axiom.
\end{rmrk}

\begin{rmrk}
\label{rmrk.after.phd}
In view of the difficulties discussed in 
Remark~\ref{rmrk.natural.axiom} arising from the practical cost of implementing Theorem~\ref{thrm.equivar} in Isabelle,
after my PhD I initiated a `mark~2' axiomatisation of nominal techniques in which equivariance was an axiom-scheme (technically: an Isabelle Oracle).

This contrasts for example with the first-order axiomatisation spun off by Pitts \cite{pitts:nomlfo,pitts:nomlfo-jv} at the same time, which following~\cite{gabbay:newaas,gabbay:newaas-jv} treated equivariance as a theorem in the style of Theorem~\ref{thrm.equivar}.

So it is an interesting aspect of the mark~2 implementation that equivariance was an axiom-scheme in the style of \rulefont{Equivar}.
Where other work has treated equivariance, it has treated it as a theorem in the style of Theorem~\ref{thrm.equivar} (see for example Theorem~8.1.10 of \cite{gabbay:thesis}, Lemma~4.7 of~\cite{gabbay:newaas-jv}, and Proposition~2 of~\cite{pitts:nomlfo-jv}).
A message of the mark~2 implementation is that in practical engineering terms, it may be useful for equivariance to be an axiom.
\end{rmrk}

\begin{rmrk}[Equivariance the practical time-saver]
\label{rmrk.equivar.time}
We take a step back and consider the practical application of equivariance to writing mathematics papers.
Provided the reader accepts Theorem~\ref{thrm.equivar} --- and alas, as outlined in Remark~\ref{rmrk.equivar.native} not all readers do --- it costs the same to write `from \rulefont{Equivar}' as it does to write `from Theorem~\ref{thrm.equivar}': it is all just `equivariance'.

Consider our example Lemma~\ref{lemm.equivar.pairs}.
This has a one-line proof from equivariance which we give first, and a longer proof by explicit calculations which we also write out (though not in full).

In practice, in nominal techniques if we write a definition we will most probably want to prove it equivariant. 
Without an equivariance principle like \rulefont{Equivar}/Theorem~\ref{thrm.equivar}, the cost of these proofs increases roughly linearly with complexity for each definition, and so can rise roughly quadratically for the paper overall.\footnote{This being a mathematics paper we need make no apology for slipping in an idealised model: a paper containing $n$ definitions where the $i$th definition has complexity $i$ for $1\leq i\leq n$ will require $n$ equivariance proofs by concrete calculations each of length roughly $i$ for $1\leq i\leq n$.  We observe that $\Sigma_{1\leq i\leq n} i$ is order $n^2$.}

Nominal techniques were developed to cut development time of formal proofs in theorem-provers, but they are also effective in rigorous but informal proofs --- i.e. for ordinary mathematics.
Equivariance creates value because it clears tangles of lemmas into a single unifying principle and distils their long proofs by calculation down to crisp one-liners.

This is clear even from simple examples: consider the first line of the proof of Lemma~\ref{lemm.equivar.pairs}, and the proof of Lemma~\ref{lemm.support.properties}.

The first practical application of equivariance in rigorous but informal mathematics was in~\cite{gabbay:frelog}; see Section~7.4.
For more recent examples see uses of Theorem~2.13 of~\cite{gabbay:stodfo} and of Theorem~2.3.1 of~\cite{gabbay:repdul}.
The reader can find practical examples of how equivariance can be usefully applied in normal mathematics, in those papers.
\end{rmrk}

\begin{rmrk}
In conclusion:
\begin{itemize}
\item
When we introduced nominal techniques in \cite{gabbay:newaas-jv}, we could have marketed EZFA as a foundation, and not FM.\footnote{This is not a critisism of FM \emph{per se}.  And since the paper was written in collaboration, this is also not a critisism of my coauthor.}
\item
When we write maths papers using nominal techniques we might do well to be explicit that we are working in ZFA or EZFA.
\item
Equivariance can be usefully applied in the background of normal maths papers, and when so used it converts collections of long renaming lemmas into crisp one-liners (just like it can do in a theorem-prover).
\item
To be most successful, a theorem-prover implementation of names should be based on axioms modelled on EZFA, and not on ZF.

If the implementation is a set theory it should probably \emph{be} EZFA; if it is a (simple) type theory then it will need to assume a primitive type of atoms along with suitable axioms including \rulefont{Equivar}, or an oracle, or at least a universally applicable automated tactic.\footnote{Seasoned with axioms to taste of course, such as Replacement or Choice.  The critical point is that equivariance must have constant cost: if we use Theorem~\ref{thrm.equivar} instead of \rulefont{Equivar} then we \emph{must} include an automated tactic such that use of the Theorem becomes functionally equivalent to invoking an axiom \rulefont{Equivar}.
Cost must \emph{not} scale up with the complexity of the predicate $\phi$.}
\end{itemize} 
\end{rmrk}

\subsection{Five ways to not understand equivariance}

We now try to head off some of the ways in which 
equivariance has been misunderstood:
\begin{rmrk}
\label{rmrk.ezfa.comment}
\begin{enumerate}
\item
$\vara_1,\dots,\vara_n$ from Figure~\ref{fig.equivar} must mention \emph{all} the variables mentioned in the predicate $\phi$.

It is not the case that $x=y$ if and only if $\pi\act x=y$ in general; however, it is the case that $x=y$ if and only if $\pi\act x=\pi\act y$, and similarly for $y\in x$.
\item
If we extend our term language with constants, then we need to extend \rulefont{Equivar} sensibly.
 
For instance, if we introduce a constant $c\in\atoms$ and blindly extend the axiom-scheme \rulefont{Equivar} then we will create a contradiction, because it is not the case that $x=c$ if and only if $\pi(x)=c$.

Instead, we should extend the axiom-scheme \rulefont{Equivar} so that --- using the notation from Notation~\ref{nttn.syntax.action} and~\ref{lemm.compact.equivar} --- $\pi\mact c = \pi\act c$.
This is essentially how Notation~\ref{nttn.syntax.action}(2) works.

In other instances we might consider using a variable instead of a constant, or making our constant be a function over the nonequivariant choice of atoms concerned \dots or perhaps this is all a sign that really $c$ belongs in a different datatype, like $\mathbb N$.
\item
\emph{Equivariant} constants where $\pi\act c=c$ is natural and desired, such as $\atoms$, $\mathbb N$, or $0$, are fine.

Most constants of practical interest are equivariant, the only exception being when for convenience we want to add all elements of a model to our language, as we do in Definition~\ref{defn.enrich}.
\item
Theorem~\ref{thrm.equivar} does not imply that if $\mathfrak M$ is a model of ZFA then every $x\in\mathfrak M$ is equivariant.

It implies that every $x\in\mathfrak M$ that we can reference explicitly using a closed pure term, is equivariant.
That is a very different assertion.
\item
Theorem~\ref{thrm.equivar} does not imply that if $\mathfrak M$ is a model of ZFA then every $x\in\mathfrak M$ has finite support (see \cite{gabbay:newaas-jv} or Subsection~\ref{subsect.support}).

Indeed, $\mathfrak M$ can and in general will contain elements with finite support, infinite support, or no sensible notion of support at all.
For examples see Exercise~\ref{exercise.infinite}.
\end{enumerate}
\end{rmrk}

\begin{rmrk}
\label{rmrk.ezfa.choice.terms}
We can extend our term language with a \emph{unique choice} term-former $\iota\vara.\phi$ with an axiom-scheme
$$
(\ExistsUnique{\vara}\phi) \limp \phi[\vara\ssm \iota\vara.\phi] .
$$
The \rulefont{Equivar} axiom-scheme extends smoothly in this case.

However, we do need to be sensible if we extend our term-language with arbitrary choice $\epsilon\vara.\phi$ (also called \emph{Hilbert's epsilon}) with axiom-scheme
$$
(\Exists{\vara}\phi) \limp \phi[\vara\ssm\epsilon\vara.\phi] 
$$
because we can then write a closed term such as $\epsilon\vara.(\vara\in\atoms)\in\atoms$.

In this case we need to extend \rulefont{Equivar} to the language with $\epsilon$ in such a way that any nonequivariant choices made by $\epsilon$ are respected, for instance by setting $\pi\mact\epsilon\vara.\phi = \pi\act\epsilon\vara.\phi$.\footnote{The version of Isabelle/ZF used during my PhD wielded $\epsilon$ with the joy of a toddler with a biro and a white sofa.  The modern version seems more refined, with uses of $\epsilon$ minimised.}
\end{rmrk}

\section{Relative consistency, and freshness}
\label{sect.consistency}

\subsection{Relative consistency of EZFAC}

We prove that EZFAC is consistent relative to ZF.
This is a known result but aside from a sketch in \cite{gabbay:thesis} this has not been spelled out in the nominal literature: 
\begin{defn}
\label{defn.M.on}
Suppose $\mathfrak M$ is a model of ZFA.
Write $\ONM$ for the class of $\mathfrak M$-ordinals, or (assuming $\mathfrak M$ is fixed) just $\ON$ for short. 
This can be taken to be the least collection in $\mathfrak M$ such that:
\begin{enumerate*}
\item
$\ON$ is \deffont{transitive}, meaning that if $\alpha\in\ON$ and $\alpha'\in\alpha$ then $\alpha'\in\ON$.
\item
If $U$ is a transitive subset of $\ON$, then $U\in\ON$.
\end{enumerate*}
\end{defn}

\begin{thrm}
\label{thrm.ezfa.consistent}
\begin{enumerate}
\item
ZFAC is consistent relative to ZFC.
\item
As a corollary, EZFAC is consistent relative to ZFC.
\end{enumerate}
\end{thrm}
\begin{proof}
The corollary follows from part~1 of this result using Theorem~\ref{thrm.equivar}, so we now prove part~1 of this result.

Assume some model $\mathfrak M$ of ZFC.
We will use this to build a model $\mathfrak N$ of ZFAC.
Note that by Theorem~\ref{thrm.equivar} $\mathfrak N$ will also be a model of EZFAC.

The elements of $\mathfrak N$ will be a proper class in $\mathfrak M$, so the rest of this proof will be conducted inside $\mathfrak M$.
 
Let $x$, $y$, and $z$ range over elements of $\mathfrak M$, and let $i,j\in\{0,1\}\in\mathfrak M$, and recall the construction of ordered pairs $(x,y)\in\mathfrak M$ from Definition~\ref{defn.ordered.pairs}.
Define relations $\dotin$ and $\dotsubseteq$ on $\mathfrak M$ as follows:
\begin{itemize*}
\item
$(x,i)\dotin (y,0)$ is false for $i\in\{0,1\}$. 
\item
$(x,0)\dotin (y,1)$ when $(x,0)\in y$. 
\item
$(x,1)\dotin (y,1)$ when $(x,1)\in y$. 
\item
$(x,i)\dotsubseteq (y,j)$ when for every $(z,k)$ with $z\in\mathfrak M$ and $k\in\{0,1\}$, if $(z,k)\dotin(x,i)$ then $(z,k)\dotin(y,j)$.
\end{itemize*}

Then we define a class $\mathfrak N\subseteq\mathfrak M$ inductively as follows:
$$
\begin{array}{r@{\ }l}
\mathfrak N_0 =& \{(n,0) \mid n\in\mathbb N\}
\\
\mathfrak N_{\alpha} =& \{(x,1) \mid (x,1)\dotsubseteq (\bigcup_{\alpha'<\alpha}\mathfrak N_{\alpha'},1) \} \quad \alpha\in\f{ON}
\\
\mathfrak N =&\bigcup_{\alpha\in\f{ON}}\mathfrak N_{\alpha}
\end{array}
$$
Intuitively every element in $\mathfrak N$ is tagged with $0$ or $1$, where $0$ indicates `$\mathfrak N$ believes that I am an atom' and $1$ indicates `$\mathfrak N$ believes that I am a set'.

$\mathfrak N$ with $\dotin$ extends to a model of the syntax from Definition~\ref{defn.fol} as follows:
\begin{itemize*}
\item
$\mathfrak N\ment x\doteq y$ when $x=y$.
\item
$\mathfrak N\ment \Forall{\vara}\phi$ when $\mathfrak N\ment \phi[\vara\ssm x]$ for every $x\in\mathfrak N$.
\item
$\{x,y\}_{\mathfrak N} = (\{x,y\},1)$.
\item
$\powerset_{\mathfrak N}(y) = (\{x\in\mathfrak N \mid x\dotsubseteq y\},1)$.
\item
$\bigcup_{\mathfrak N} x = (\{x'\in\mathfrak N \mid x'\dotin x\},1)$.
\item
$\{\vara\in y\mid \phi\}_{\mathfrak N} = (\{x\dotin y \mid \mathfrak N\ment\phi[\vara\ssm x]\},1)$.
\item
$\varnothing_{\mathfrak N} = (\varnothing,1)$.
\item
$\atoms_{\mathfrak N} = (\{(n,0)\mid n\in\mathbb N\},1)$.
\end{itemize*}
It is routine to check that this satisfies the axioms of ZFAC from Figure~\ref{fig.zfa}.
In particular, $\mathfrak N$ satisfies \rulefont{Choice} because the ambient universe $\mathfrak N$ does --- we just make choices on tagged sets.
\end{proof}

%%%%%%%%%%%%%%%%%%%%%%%%
\subsection{On the group of permutations}
\label{subsect.on.permutations}

In Definition~\ref{defn.pi} and \rulefont{Equivar} we took a \emph{permutation} $\pi\in\perm$ to be any bijection on $\atoms$.
In more advanced usage, choosing what to include in $\perm$ becomes an interesting design decision.
Here is a sample of the options:
\begin{enumerate*}
\item
A permutation $\pi$ is \emph{any} bijection on atoms.

This is the simplest option, and it is what we did in Definition~\ref{defn.pi}.

Note that `any' should be interpreted within the model, so it is perfectly compatible for a permutation to be any bijection on atoms, and also all permutations are finitely supported; this is what happens in Fraenkel-Mostowski set theory (and see next point).
\item
A permutation $\pi$ is a bijection on atoms such that $\nontriv(\pi)=\{a{\in}\atoms\mid\pi(a)\neq a\}$ is finite (\dots or countable, and so forth; choose your favourite cardinality restriction).
We call such a $\pi$ \deffont{finitely supported}. 
\item
Fix some total ordering $\leq$ on $\atoms$.
Then a permutation is a bijection on atoms that \emph{respects} $\leq$, meaning that $a\leq b$ if and only if $\pi(a)\leq\pi(b)$.

This notion is particularly important in a computational context, where we may wish to assume that atoms are orderable.
\item
Fix a partition $\atoms=\atoms^<\cup\atoms^>$ where $\atoms^<$ and $\atoms^>$ are larger than finite (we can fill in our favourite cardinality constraint here; e.g. countable).
Then a permutation is a bijection $\pi$ on atoms such that $\nontriv(\pi)\setminus\atoms^>$ is finite (countable).
This is the basis of \emph{permissive-nominal} techniques as used for instance in~\cite{gabbay:pernl-jv} and is helpful for quantifying over nominal terms style unknowns. 
\item
Fix a $\mathbb Z$-indexed family $(S_k\mid k\in\mathbb Z)$ such that $S_k\subseteq\atoms$, and $S_k\subseteq S_{k\plus 1}$ for every $k\in\mathbb Z$, and (writing $\card$ for cardinality) $\card S_k = \card \atoms\setminus S_k = S_{k\plus j}\setminus S_k$ for every $j,k\in\mathbb N$ such that $j\geq 1$.

Then a permutation is a bijection $\pi$ on atoms such that $\nontriv(\pi)\subseteq S_k$ for some $k\in\mathbb Z$.

This notion of permutation has many useful properties: for instance there exist $\pi$ such that $\card\nontriv(\pi)=\card\atoms$ and for any $\pi,\pi'$ we have that $\card(\fix(\pi)\cap\fix(\pi'))=\card\atoms$.
\end{enumerate*} 
There is no one right answer to the question of `What is a permutation?'.
It depends on what we want to accomplish, and how much structure we want to allow atoms to have.

%%%%%%%%%%%%%%%%%%%%%%%%
\subsection{Support and freshness}
\label{subsect.support}

Aside from equivariance, another notable feature of nominal techniques is \emph{support} and \emph{freshness}.
We give a concise account tailored to the `classic' nominal case where all permutations are finite:

\subsubsection{The basic definition}

\begin{nttn}
\label{nttn.fix}
If $A\subseteq\atoms$ write 
$$
\fix(A)=\{\pi\in\perm \mid \Forall{a{\in} A}\pi(a)=a\} .
$$
\end{nttn}

Recall the pointwise atoms-permutation action from Definition~\ref{defn.permutation.action}:
\begin{defn}
\label{defn.small.supported}
\begin{enumerate}
\item\label{small.supported.supports}
Say that $K\subseteq\atoms$ \deffont{supports} an element $x$ when $\Forall{\pi{\in}\fix(K)}\pi\act x=x$.

We may write 
$$
K\$x\quad\text{for}\quad\text{`$K$ supports $x$'.}
$$
\item
If $x$ is $K$-supported for some finite $K\subseteq\atoms$ then call $x$ \deffont{(finitely) supported} and say that $x$ has \deffont{finite support}.
\item
Finally we define $K\#x$ --- in words: $K$ is \deffont{fresh for} $x$ --- when
$$
K\#x\quad\text{for}\quad\Exists{K'{\subseteq}\atoms}(K'\$x\land K\cap K'=\varnothing) .
$$
Taking $K=\{a\}$, we might write $\{a\}\#x$ as $a\#x$.
\item\label{x.equivariant}
If $x$ is supported by $\varnothing$, so that $\pi\act x=x$ for every permutation $\pi$, then call $x$ \deffont{equivariant}.
Otherwise call $x$ \deffont{non-equivariant}.
\end{enumerate}
\end{defn}

\begin{rmrk}
\begin{itemize}
\item
Equivariance does not mean that for every model $\mathfrak M$ of (E)ZFAC and every $x\in\mathfrak M$, $x$ must be equivariant in the sense of Definition~\ref{defn.small.supported}(\ref{x.equivariant}); this is simply a false inference.
\item
Furthermore, just because $X\in\mathfrak M$ is equivariant in the sense of Definition~\ref{defn.small.supported}(\ref{x.equivariant}) does not imply that every $x\in X$ is equivariant.
We have:
\end{itemize}
\end{rmrk}

\begin{xrcs}
\label{exercise.infinite}
\begin{enumerate}
\item
Using Definition~\ref{defn.permutation.action}, prove that $\atoms$ is equivariant (that is, $\pi\act\atoms=\atoms$), but every $x\in\atoms$ is non-equivariant.
\item 
Find an $X$ such that $X$ is equivariant, but every $x\in X$ does not have finite support.\footnote{\emph{Hint:}\ order.}
\end{enumerate}
\end{xrcs}

Lemma~\ref{lemm.equivar.in.char} collects the most important corollaries of Definition~\ref{defn.small.supported}:
\begin{lemm}
\label{lemm.equivar.in.char}
\label{lemm.stuff}
Suppose $X\in\mathfrak M$ is a set (thus $X\not\in\atoms$) and $K\subseteq\atoms$ is a set of atoms.
Then:
\begin{enumerate*}
\item 
$X$ is equivariant if and only if 
$\Forall{x}\Forall{\pi{\in}\f{Perm}}\bigl(x{\in} X \liff \pi\act x\in X\bigr)$.
\item
$K\$X$ if and only if 
$\Forall{x}\Forall{\pi{\in}\fix(K)}\bigl(x{\in} X \liff \pi\act x\in X\bigr)$.
\end{enumerate*}
Suppose $A\subseteq\atoms$ supports some element $x$. 
Then:
\begin{enumerate*}
\setcounter{enumi}{2}
\item\label{stuff.fixsupp.fixelt}
If $\pi\in\fix(A)$ then $\pi\act x=x$. 
\item\label{stuff.fixsupp.identical}
If $\pi(a)=\pi'(a)$ for every $a{\in}A$ then $\pi\act x=\pi'\act x$.
\end{enumerate*}
\end{lemm}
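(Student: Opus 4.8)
The plan is to treat the four parts in order of increasing effort, first observing that part~1 is the special case $K=\varnothing$ of part~2: indeed $\fix(\varnothing)=\perm$, and by Definition~\ref{defn.small.supported}(\ref{x.equivariant}) $X$ is equivariant exactly when $\varnothing$ supports $X$. So it suffices to prove parts~2,~3, and~4. Part~3 is immediate, since ``$A$ supports $x$'' unfolds by Definition~\ref{defn.small.supported}(\ref{small.supported.supports}) to precisely $\Forall{\pi{\in}\fix(A)}\pi\act x=x$; specialising this to the given $\pi\in\fix(A)$ delivers $\pi\act x=x$. For part~4 the idea is to form the composite $\pi'^\mone\circ\pi$ and check that it fixes $A$ pointwise: for $a\in A$ we get $(\pi'^\mone\circ\pi)(a)=\pi'^\mone(\pi(a))=\pi'^\mone(\pi'(a))=a$, using the hypothesis $\pi(a)=\pi'(a)$. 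Hence $\pi'^\mone\circ\pi\in\fix(A)$, and part~3 gives $(\pi'^\mone\circ\pi)\act x=x$. Applying $\pi'$ to both sides and using Lemma~\ref{lemm.perm.group.action} (so that $\pi'\act((\pi'^\mone\circ\pi)\act x)=(\pi'\circ\pi'^\mone\circ\pi)\act x=\pi\act x$) yields $\pi\act x=\pi'\act x$, as required.

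For part~2 I would prove the two implications separately, using throughout that $\fix(K)$ is closed under composition and inverse, hence is a subgroup of $\perm$. For the left-to-right direction, assume $K$ supports $X$; then for any $\pi\in\fix(K)$ we have $\pi\act X=X$, so Lemma~\ref{lemm.base.equivar}(1) gives $x\in X\liff\pi\act x\in\pi\act X=X$ for every $x$, which is exactly the membership characterisation. For the right-to-left direction, assume the characterisation and fix $\pi\in\fix(K)$; I would establish $\pi\act X=X$ by antisymmetry. The inclusion $\pi\act X\subseteq X$ is direct from $\pi\act X=\{\pi\act x\mid x\in X\}$ (Definition~\ref{defn.permutation.action}) together with the hypothesis. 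For the reverse inclusion $X\subseteq\pi\act X$, given $y\in X$ I would apply the hypothesis to the element $\pi^\mone\act y$ and the permutation $\pi^\mone\in\fix(K)$ to conclude $\pi^\mone\act y\in X$, whence $y=\pi\act(\pi^\mone\act y)\in\pi\act X$ by Lemma~\ref{lemm.perm.group.action}.

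The only step that needs genuine care is this last one: the reverse inclusion relies on the bijectivity of the $\pi$-action (Corollary~\ref{corr.pi.univ.bijection}) and, crucially, on $\fix(K)$ being closed under inverses so that $\pi^\mone$ is again an admissible permutation at which to invoke the hypothesis. Everything else is routine bookkeeping with the group-action identities of Lemma~\ref{lemm.perm.group.action} and the membership-preservation of Lemma~\ref{lemm.base.equivar}.
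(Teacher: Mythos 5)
Your proof is correct and matches the paper's approach: the paper's own proof is literally ``By routine calculations from Definition~\ref{defn.small.supported}'', and your argument is precisely that routine unfolding spelled out (the reduction of part~1 to part~2 via $\fix(\varnothing)=\perm$, the use of Lemma~\ref{lemm.base.equivar} and closure of $\fix(K)$ under inverses for part~2, and the $\pi'^\mone\circ\pi$ composite for part~4). No gaps.
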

\begin{proof}
By routine calculations from Definition~\ref{defn.small.supported}. 
\end{proof}

\begin{prop}
\label{prop.supp}
Suppose $\mathfrak M$ is a model of ZFA and $x\in\mathfrak M$ has finite support.
\begin{enumerate}
\item
If $C,C'\finsubseteq\atoms$ both support $x$, then $C\cap C'$ supports $x$.
\item
The \deffont{support} of $x$ defined by
$$
\supp(x)=\bigcap\{C\finsubseteq\atoms\mid C\$x\}
$$ 
is the unique least finite set of atoms supporting $x$.
\end{enumerate} 
\end{prop}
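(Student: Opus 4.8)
The plan is to treat part~1 as the heart of the argument and to obtain part~2 from it by routine bookkeeping. Since this subsection is set in the classic case where every permutation has finite $\nontriv(\pi)$, I would first reduce part~1 to swaps. Given $\pi\in\fix(C\cap C')$, the set $N=\nontriv(\pi)$ is finite and disjoint from $C\cap C'$, and $N$ is closed under $\pi$ and $\pi^\mone$ (if $\pi(a)=b$ with $\pi(b)=b$ then injectivity forces $a=b$, contradicting $a\in N$). Hence $\pi$ restricts to a permutation of the finite set $N$ and factors as a finite composition of swaps of atoms drawn from $N$. Each such swap lies in $\fix(C\cap C')$, so by Lemma~\ref{lemm.perm.group.action} it suffices to prove that $(a\;b)\act x=x$ for every swap with $a,b\notin C\cap C'$.

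For such a swap I would split into cases. If $a,b\notin C$ then $(a\;b)\in\fix(C)$ and $C$ supports $x$, giving $(a\;b)\act x=x$; symmetrically if $a,b\notin C'$. The hard part---and the real content of the proposition---is the mixed case, say $a\in C'\setminus C$ and $b\in C\setminus C'$, where $(a\;b)$ fixes neither $C$ nor $C'$ pointwise. Here I would pick a fresh atom $c\notin C\cup C'\cup\{a,b\}$ and use the factorisation $(a\;b)=(a\;c)(b\;c)(a\;c)$. Now $(a\;c)\in\fix(C)$ (as $a,c\notin C$) and $(b\;c)\in\fix(C')$ (as $b,c\notin C'$), so applying Lemma~\ref{lemm.perm.group.action} and using that each of $C,C'$ supports $x$ yields $(a\;b)\act x=(a\;c)\act((b\;c)\act((a\;c)\act x))=x$. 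This fresh-intermediary trick is the only place where the two hypotheses $C\$x$ and $C'\$x$ are genuinely combined, and it is where I expect the sole real difficulty to lie; everything else is either the group-action computation of Lemma~\ref{lemm.perm.group.action} or elementary combinatorics of finite permutations.

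For part~2 I would first extend part~1 by an easy induction to finite intersections: any intersection of finitely many finite supporting sets still supports $x$. Since $x$ has finite support there is some finite $C_0$ with $C_0\$x$, so $\supp(x)=\bigcap\{C\finsubseteq\atoms\mid C\$x\}\subseteq C_0$ is finite. To see that $\supp(x)$ itself supports $x$, for each $a\in C_0\setminus\supp(x)$ the definition of the intersection supplies a finite supporting $C_a$ with $a\notin C_a$; then $D=C_0\cap\bigcap_{a\in C_0\setminus\supp(x)}C_a$ is a finite intersection of finite supporting sets, hence supports $x$. A squeeze now identifies $D$ with $\supp(x)$: on one hand $D\subseteq C_0$ avoids every $a\in C_0\setminus\supp(x)$, so $D\subseteq\supp(x)$; on the other hand $\supp(x)\subseteq D$ because $\supp(x)$ is contained in each supporting set. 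Thus $D=\supp(x)$ supports $x$. Finally $\supp(x)$ is by construction contained in every finite supporting set, so it is the least such, and being least it is automatically the unique least, completing the proof.
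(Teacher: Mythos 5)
Your proof is correct, and it is essentially the canonical argument: the paper itself gives no proof here, deferring to Theorem~2.21 of \cite{gabbay:fountl}, and what you have written out (reduction to swaps of $\nontriv(\pi)$, the fresh-atom conjugation $(a\;b)=(a\;c)(b\;c)(a\;c)$ with $(a\;c)\in\fix(C)$ and $(b\;c)\in\fix(C')$ for the mixed case, then the finite-intersection squeeze identifying $\supp(x)$ with a supporting set for part~2) is exactly the standard proof that citation contains. No gaps; your explicit restriction to the finite-permutation setting of this subsection is the right caveat to flag, since the swap decomposition is where finiteness of $\nontriv(\pi)$ is genuinely used.
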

\begin{proof}
See e.g. Theorem~2.21 of \cite{gabbay:fountl}.
\end{proof}

We can now give a nice illustration of equivariance: 
\begin{lemm}
\label{lemm.support.properties}
\begin{itemize*}
\item
$\pi\act\supp(x)=\supp(\pi\act x)$.
\item
$K\$x$ if and only if $(\pi\act K)\$(\pi\act x)$.
\item
$a\# x$ if and only if $\pi(a)\#(\pi\act x)$.
\end{itemize*}
\end{lemm}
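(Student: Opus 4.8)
The plan is to derive all three parts as one-line instances of equivariance, exactly in the spirit of the headline proof of Lemma~\ref{lemm.equivar.pairs}. The crucial observation to make first is that $\supp(\text{-})$, the relation $\$$, and the relation $\#$ are each given by a \emph{pure} term or predicate in the language of Definition~\ref{defn.fol}: unfolding their definitions, they mention only $\atoms$, $\perm$, $\fix$, the pointwise action $\act$, together with logical connectives and quantifiers, but no specific element of $\mathfrak M$. Since $\atoms$ and $\perm$ are equivariant constants, each of the three claims is then obtained simply by quoting \rulefont{Equivar} (equivalently Theorem~\ref{thrm.equivar}) and reading off the substituted formula.

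Concretely, for the second clause I would treat $K\$x$ as a pure predicate $\Phi(\vara,\varb)$ in exactly two free variables, whose body is $\Forall{\pi'\in\fix(\vara)}\pi'\act\varb=\varb$ (renaming the bound permutation of Definition~\ref{defn.small.supported} to $\pi'$ so it does not clash with the outer $\pi$). Applying \rulefont{Equivar} with permutation $\pi$, i.e. substituting $\vara\ssm\pi\act K$ and $\varb\ssm\pi\act x$, yields $\Phi(K,x)\liff\Phi(\pi\act K,\pi\act x)$, which unwinds to $K\$x$ if and only if $(\pi\act K)\$(\pi\act x)$. The third clause is the same argument applied to the pure predicate $\vara\#\varb$, together with the fact from Definition~\ref{defn.permutation.action} that $\pi\act a=\pi(a)$ for an atom $a$, so that the conclusion $a\#x\liff(\pi\act a)\#(\pi\act x)$ reads as $a\#x$ if and only if $\pi(a)\#(\pi\act x)$. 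For the first clause I would instead use the term form of equivariance: by the identity $\pi\act s=\pi\mact s$ established in the proof of Theorem~\ref{thrm.equivar} combined with Lemma~\ref{lemm.compact.equivar}, applied to the pure term $\supp(\vara)=\bigcap\{C\finsubseteq\atoms\mid C\$\vara\}$, one gets $\pi\act\supp(x)=\supp(\pi\act x)$ directly, as an equality of terms, with no need to assume $x$ is finitely supported.

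The one point requiring genuine care — and the only thing I would verify rather than assert — is that the syntactic substitution demanded by \rulefont{Equivar} interacts correctly with the quantifiers bound \emph{inside} these definitions. In the body of $\$$, for instance, the bound variable $\pi'$ of $\Forall{\pi'\in\fix(\vara)}$ is untouched by the substitution for the free variables $\vara,\varb$, and one must confirm that there is no capture against the outer $\pi$ and that $\fix(\vara)$ correctly becomes $\fix(\pi\act\vara)$ after substitution. This is routine $\alpha$-renaming bookkeeping, and it is precisely the kind of hidden renaming lemma that equivariance is designed to discharge uniformly; it is where a fully explicit proof would otherwise have to grind.

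For the reader's benefit I would also record the direct calculational route for the first clause, to echo the contrast drawn in Remark~\ref{rmrk.equivar.time}: it follows from the second clause because $\pi\act(\text{-})$ is a bijection on $\mathfrak M$ (Corollary~\ref{corr.pi.univ.bijection}) that commutes with $\subseteq$ and hence with $\bigcap$ (Lemma~\ref{lemm.base.equivar}), so it carries the family $\{C\finsubseteq\atoms\mid C\$x\}$ bijectively onto $\{C\finsubseteq\atoms\mid C\$(\pi\act x)\}$ and therefore sends the least member to the least member. But the intended proof is the equivariance one-liner, and that is what I would present as primary.
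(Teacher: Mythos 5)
Your proposal is correct and takes essentially the same route as the paper, whose entire proof is the one-liner ``From \rulefont{Equivar}/Theorem~\ref{thrm.equivar}''; you have simply unfolded the details (purity of the predicates, substitution of all free variables, the $\alpha$-renaming bookkeeping) that the paper leaves implicit. The only caveat worth noting is your aside that the first clause needs no finite-support assumption: since $\supp(x)$ is only defined in Proposition~\ref{prop.supp} for finitely supported $x$, that hypothesis is already built into the statement rather than being dispensable.
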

\begin{proof}
From \rulefont{Equivar}/Theorem~\ref{thrm.equivar}.
\end{proof}

\begin{rmrk}
\label{rmrk.EZFA.AC}
We discussed in Subsection~\ref{subsect.equivar.and.choice} how \rulefont{Choice} from Figure~\ref{fig.zfa} is compatible with \rulefont{Equivar} from Figure~\ref{fig.equivar}.
There is an incompatibility, but it is not with \rulefont{Equivar}.

Consider the nominal \deffont{finite support axiom} that every element has small support, which we can write in a natural notation as follows:
$$
\hspace{-15em}\rulefont{Fresh}\hspace{6em}
\Forall{\vara}\Exists{\varb\finsubseteq\atoms}\varb\$\vara .
$$
The name `Fresh' for this axiom goes back to \cite{gabbay:newaas,gabbay:newaas-jv}; there it was used in the specific case that `small = finite', and written $\Forall{\vara}\Exists{\varb}\varb\#\vara$; proving equivalence to the axiom above is not hard.

If we introduce \rulefont{Fresh} as an axiom to be satisfied by all elements, then this conflicts with \rulefont{Choice}.
For example, a choice function from $\powerset^*(\atoms)$ to $\atoms$ will not have finite support.
We discuss this next:
\end{rmrk}

\subsubsection{\rulefont{Fresh} as a well-behavedness property}
\label{subsect.fresh.well-behaved}

\rulefont{Fresh} is a \emph{native ZFA} property: it is best approached from within a ZFA universe.
\rulefont{Fresh} is typically presented in the literature as an axiom, for example:
\begin{itemize*}
\item
axiom \rulefont{Fresh} just before Definition~4.4 in \cite{gabbay:newaas-jv}, 
\item
the \emph{finite support} axiom in Definition~1(ii) in \cite{pitts:nomlfo-jv},
\item
the axiom \rulefont{Fresh} in Figure~2 of \cite{gabbay:fountl},  and
\item 
Definition~2.2 of \cite{pitts:nomsns}
\end{itemize*}
(stated for finite support in \cite{gabbay:newaas-jv,pitts:nomlfo-jv,pitts:nomsns} and for possibly infinite support in \cite{gabbay:fountl}).

However, it is sometimes preferable to present \rulefont{Fresh} not as an axiom of Fraenkel-Mostowski set theory but instead as a \emph{well-behavedness} or \emph{niceness} property of the larger (E)ZFA universe, alongside other well-behavedness properties such as `being countable', `being computable', `being a closed set in a topological space', and so forth.
Thus, from the point of view of EZFA, being supported is just a property that some elements have and others do not.

There are two reasons for this:
\begin{enumerate}
\item
Even if every element we intend to work with will satisfy \rulefont{Fresh}, presenting \rulefont{Fresh} as a well-behavedness property does no harm and may be beneficial.
To see why, consider that no paper on number theory has ever been rejected because 
\begin{quote}
``number theory assumes that everything has to be a number and therefore countable, which is easily contradicted by an easy G\"odel diagonalisation argument''.
\end{quote}
But this is only because we can rely on readers having a background understanding that ``assume arithmetic'' does not necessarily mean ``\dots and reject infinity''. 
Yet by presenting \rulefont{Fresh} as an axiom we open ourselves to giving a precisely analogous impression, and if by chance we mention a non-supported set --- meaning no more by this than if a number theorist mentions `the set of even numbers' --- then we risk giving the appearance of inconsistency. 

This is not necessary: we can start with EZFA, treat \rulefont{Fresh} as a well-behavedness property, and then everything can proceed rigorously, in a single foundation, and at little to no cost. Had we done this in~\cite{gabbay:newaas-jv} it might have saved some trouble.\footnote{Larry Paulson once told me that in the early days of Isabelle he had trouble getting papers on theorem-provers published because referees said ``Computers are finite, so mechanised mathematics can only ever talk about finite sets and this is obviously too limited to be useful''.  Critisisms of nominal techniques in general and \rulefont{Fresh} in particular have often had a similar nature, but this does not mean we should walk into the trap and keep asking for more.} 
\item
We sometimes require non-supported elements, and furthermore being supported is not a hereditary sets property.

For example: two recent papers \cite{gabbay:stodfo,gabbay:repdul} are concerned with collections of sets which are supported, but whose elements do not have support.
These are `nominal-flavoured' papers --- they concern sets with support properties --- but these sets cannot be expressed in pure FM; their native foundation is EZFA(C). 

For a more concrete example: in \cite[Definition~5.3]{gabbay:newaas-jv} we considered a \emph{concretion} function $x\at a$. 
This can be represented in FM, but for technical reasons to do with preserving finite support, it must be represented as a partial function. 
ZFA is not subject to finite support restrictions, so concretion can be total; it is defined as its FM version where its input has support, and simply takes arbitrary values elsewhere. 

Concerning support not being a hereditary property, note that there are perfectly reasonable sets in the ZFA universe that have finite support even though their elements do not have finite support, and conversely there are sets that do not have finite support even though all their elements do.  
For example:
\begin{itemize*}
\item
`The set of all well-orderings of atoms' is supported by $\varnothing$, since if $O$ is a well-ordering on atoms then so is $\pi\act O$.
However, no well-ordering of $\atoms$ has finite support. 
\item
A well-ordering of $\atoms$, write it $\leq$, encoded as a set of ordered pairs so that ${\leq} = \{(m,n)\in\atoms^2 \mid m\leq n\}$, does not have finite support even though every element of it does ($(m,n)$ is supported by $\{m,n\}$).
\item
\cite[Remark~6.14]{gabbay:newaas-jv} considered splitting $\atoms$ into two or more infinite subtypes of atoms.  
Clearly, these partitions do not themselves have finite support.
\end{itemize*}
\end{enumerate}
Even if the reader's next paper uses an FM sets foundation, which would be perfectly reasonable, then this paper may serve as a reference to which to send the reader to explain why FM sets is a sensible foundation, what larger universe it naturally embeds in, and how using nominal techniques is compatible with the Axiom of Choice.

\subsection{PNL to HOL}

Nominal techniques model names using ZFA-style atoms.
Another model of names is functional arguments in higher-order logic (\deffont{HOL}).
An extended study of the relationship between nominal-style names and HOL-style functional arguments is in \cite{gabbay:pnlthf}.
See specifically Subsection~2.6, Lemma~4.17, and the Conclusion (Section~9) of \cite{gabbay:pnlthf}.
In the Conclusions to that paper we wrote: 
\begin{quote}
\emph{The translation \dots can only raise $n$ variables, in order [so] we lose the equivariance (name symmetry) \dots This matters because in losing symmetry we lose what makes nominal techniques
so distinctive. So although we show how to translate a complete `nominal' proof to a
complete `HOL' proof, we also see how the way in which nominal and HOL proofs
are manipulated and combined, are different.}
\end{quote}
In very brief overview, a conclusion of \cite{gabbay:pnlthf} is that a key difference between modelling names as ZFA atoms, and modelling names as functional arguments, is \emph{precisely} that the nominal representation gives us equivariance.

The mathematics in \cite{gabbay:pnlthf} is extensive and outside the scope of this paper, but the sufficiently dedicated reader might find parts of that paper interesting as further reading.

%%%%%%%%%%%%%%%%%%%%%%%%%%%
\section{Conclusions}

We have introduced EZFAC (Definition~\ref{defn.ZFA}), explored its relationship and relative consistency with ZFC sets and FM sets (Section~\ref{sect.consistency}), and given our time and consideration to a key item of the nominal toolbox: 
equivariance. 
Definitions and extended discussion are in the body of the paper.

Equivariance formalises that 
\begin{quote}
\emph{atoms are distinguishable, but interchangable}.
\end{quote}
As noted in Remark~\ref{rmrk.dual.nature} this gives atoms a dual character: individually atoms refer to themselves, but collectively they behave like variables, via the action of permutations.
This is a subtle notion which readers frequently struggle with.

In summary:
\begin{enumerate}
\item
We can present equivariance as a theorem of ZFA or FM (as in Theorem~\ref{thrm.equivar}), or as an axiom-scheme (as in \rulefont{Equivar} in Figure~\ref{fig.equivar}).
This is what the `E' in EZFA represents.

In rigorous but informal mathematics --- that is, in the written arguments which appear in a publication such as this paper --- there is no practical difference between writing ``by equivariance (Theorem~\ref{thrm.equivar})'' or ``by \rulefont{Equivar}'' \emph{except} from the point of view of reaching out to readers whose intuitions are ZF-shaped.

If a reader's starting intuitions are informed only by ZF, then this can make Theorem~\ref{thrm.equivar} difficult to perceive.
There may be benefits in clarity to taking atoms by explicit definition to be elements that populate $\atoms$ and that satisfy \rulefont{Equivar}. 
See Remarks~\ref{rmrk.equivar.native} and~\ref{rmrk.equivar.time}.
\item 
In a theorem-prover, the cost of writing ``by equivariance (Theorem~\ref{thrm.equivar})'' scales at least linearly with the complexity of the predicate, and though this can be ameliorated using automated tactics it is hard to make that cost fully disappear for the user.
Using the axiom-scheme reduces the cost of equivariance to~1.
See Remarks~\ref{rmrk.natural.axiom} and~\ref{rmrk.after.phd}.
\item
Nominal techniques introduce several tools, including nominal atoms-abstraction, support conditions, and equivariance.

At least for my style of nominal techniques, this list is in increasing order of practical importance: equivariance is the most important and often-used, followed by support conditions, followed by atoms-abstraction and related constructs.
\item
Fraenkel-Mostowski set theory (FM) is often proposed as a foundation for nominal techniques (this may be done implicitly, via the Schanuel Topos, the category of Nominal Sets, or similar).
Comparing and contrasting them:
\begin{itemize*}
\item
FM does not have \rulefont{Choice} and assumes \rulefont{Fresh} as an axiom.
\item
EZFAC has \rulefont{Choice} and treats \rulefont{Fresh} as a well-behavedness condition. 
\end{itemize*}
FM is a useful and important environment and, to be clear, this paper should not be read as an argument against it. 
Yet anything that can be done in FM can be done easily in EZFAC, and as a sets foundation for nominal techniques, EZFAC merits serious consideration.
It is compatible with \rulefont{Choice}, and it is relevantly and usefully more general: 
for instance \cite{gabbay:stodfo,gabbay:pnlthf} explicitly require an EZFAC, not an FM, foundation. 
See Subsection~\ref{subsect.fresh.well-behaved}.
And even if we use FM, the fact that it embeds in EZFAC is an important observation.
\end{enumerate}

\renewcommand\href[2]{#2}
%\newcommand\refsection[1]{#1}

%\bibliographystyle{amsalpha}
%%%\bibliographystyle{alpha}
%\bibliography{JamieBIB}

\hyphenation{Mathe-ma-ti-sche}
\providecommand{\bysame}{\leavevmode\hbox to3em{\hrulefill}\thinspace}
\providecommand{\MR}{\relax\ifhmode\unskip\space\fi MR }
% \MRhref is called by the amsart/book/proc definition of \MR.
\providecommand{\MRhref}[2]{%
  \href{http://www.ams.org/mathscinet-getitem?mr=#1}{#2}
}
\providecommand{\href}[2]{#2}

\end{document}